\newtheorem{definition}{Definition}
\newtheorem{theorem}{Theorem}
\newtheorem{lemma}{Lemma}
\newtheorem{remark}{Remark}
\newtheorem*{proof*}{Proof}
\begin{document}

\bstctlcite{IEEEexample:BSTcontrol}

\title{Region of Attraction for Power Systems using Gaussian Process and Converse Lyapunov Function -- Part I: Theoretical Framework and Off-line Study}

\author{Chao Zhai and Hung D. Nguyen \thanks{Chao Zhai and Hung D. Nguyen are with School of Electrical and Electronic Engineering, Nanyang Technological University, 50 Nanyang Avenue, Singapore 639798. Email:  zhaichao@ntu.edu.sg and hunghtd@ntu.edu.sg. Corresponding author: Hung D. Nguyen.}}


\maketitle

\begin{abstract}
This paper introduces a novel framework to construct the region of attraction (ROA) of a power system centered around a stable equilibrium by using stable state trajectories of system dynamics. Most existing works on estimating ROA rely on analytical Lyapunov functions, which are subject to two limitations: the analytic Lyapunov functions may not be always readily available, and the resulting ROA may be overly conservative. This work overcomes these two limitations by leveraging the converse Lyapunov theorem in control theory to eliminate the need for an analytic Lyapunov function and learning the unknown Lyapunov function with the Gaussian Process (GP) approach. In addition, a Gaussian Process Upper Confidence Bound (GP-UCB) based sampling algorithm is designed to reconcile the trade-off between the exploitation for enlarging the ROA and the exploration for reducing the uncertainty of sampling region. Within the constructed ROA, it is guaranteed in the probability that the system state will converge to the stable equilibrium with a confidence level. Numerical simulations are also conducted to validate the assessment approach for the ROA of the single machine infinite bus system and the New England $39$-bus system. Numerical results demonstrate that our approach can significantly enlarge the estimated ROA compared to that of the analytic Lyapunov counterpart.
\end{abstract}

\begin{IEEEkeywords}
Stability assessment, region of attraction, power systems, Lyapunov function, Gaussian process
\end{IEEEkeywords}

\section*{Nomenclature}
\begin{IEEEdescription}[\IEEEusemathlabelsep\IEEEsetlabelwidth{$V_1$,$V_2$}]
\item[$\alpha(z)$] a class $\Gamma$ function
\item[$V(\mathbf{x})$] Lyapunov function
\item[$\hat{V}(\mathbf{x})$] estimation of Lyapunov function
\item[$V^\star(\mathbf{x})$] an existing Lyapunov function
\item[$\phi(\mathbf{x},t)$] state trajectory of nonlinear dynamical system
\item[$k(\mathbf{x},\mathbf{x}')$] covariance function of GP or kernel function in the reproducing kernel Hilbert space (RKHS)
\item[$S$] real region of attraction for nonlinear system
\item[$\Omega_c$] level set of Lyapunov function with the upper bound $c$
\item[$N$] number of stable sampling points.
\item[$\Delta t$] time interval of numerical method for solving the differential equation
\item[$\delta$] parameter to specify the confidence level
\item[$\mathcal{A}_N$] set of $N$ stable sampling points
\item[$\mu_i(\mathbf{x})$] mean value of a Gaussian process at the $i$-th iteration
\item[$\sigma_i(\mathbf{x})$] standard deviation of a Gaussian process at the $i$-th iteration
\item[$\|\cdot\|$] 2-norm in Euclidean space
\item[$\|\cdot\|_k$] induced RKHS norm with the kernel $k(\mathbf{x},\mathbf{x}')$
\end{IEEEdescription}

\section{Introduction}
The region of attraction (ROA) for complex dynamical systems provides a useful measure of stability level and robustness against external disturbances. Thus, it is of great importance to safety-critical systems (e.g., power systems, nuclear reactor control systems, engine control systems, etc), where the system stability has to be guaranteed before they are implemented in practice. In terms of power systems, the ROA refers to a subspace of operating states that can converge to a steady-state equilibrium. There are various approaches for estimating the ROA of a general nonlinear system such as contraction analysis \cite{nguyen2017contraction}, level sets of Lyapunov function \cite{kha96, Longfamily}, sum of square technique \cite{izu18,che11}, sampling-based method \cite{bob16}, and so on. Nevertheless, these approaches largely rely on deterministic models and may not be applicable to deal with uncertainties in more realistic systems.


As a non-parametric method, GP is flexible to incorporate the prior information as well as to quantify the uncertainty \cite{kan18}. By regarding the unknown function or dynamics as a GP,
Bayesian machine learning provides a powerful tool for both regression and inference using the prior belief and sample data, and it can generate the posterior distribution for the unknown function \cite{ras10, teti}. Therefore, the GP approach has found wide applications in various fields such as bandit setting \cite{sri12}, robotics \cite{ber16}, and classifier design \cite{gib00}, to name just a few. As is well known, it is always a challenging problem for determining a suitable Lyapunov function for a complex nonlinear dynamical system. For a stable equilibrium point of the dynamical system, the converse Lyapunov theorem ensures the existence of Lyapunov functions and enables us to compute the values using stable state trajectories \cite{kha96}. In practice, the accurate values of Lyapunov function are not available due to numerical error and time restrictions. For this reason, it would be desirable to regard the unknown Lyapunov function as a GP. In this way, the posterior distribution of Lyapunov function can be obtained by learning the sampling data (i.e., estimated values of Lyapunov function), which makes it possible to construct and evaluate the ROA for nonlinear dynamical systems.

Thus, this paper centers on the quantitative evaluation of ROA for power systems from a GP perspective. Compared with existing work \cite{mun13,ber16,jone17}, the key contributions of this work lie in
\begin{enumerate}
    \item Propose a theoretical framework based on the converse Lyapunov theorem for estimating the ROA of general nonlinear systems around an equilibrium without constructing an analytic Lyapunov function.
    \item Develop a GP-UCB based sampling algorithm for creating a sampling set and learning the unknown Lyapunov function using stable state trajectories.
    \item For an existing Lyapunov function, our approach allows for extending the certified ROA with a guaranteed confidence level.
\end{enumerate}

The remainder of this paper is organized as follows. Section \ref{sec:roa} introduces the region of attraction for a general dynamical system and the estimation of the Lyapunov function. Section \ref{sec:gp} presents the GP approach for learning the known dynamics, followed by the main results on the GP-UCB based sampling algorithm in Section \ref{sec:main}. Numerical simulations are conducted to validate the proposed approach on the IEEE test systems in Section \ref{sec:sim}. Finally, Section \ref{sec:con} draws a conclusion and discusses future work.

\section{The ROA of a General Dynamical System} \label{sec:roa}
The ROA of a general dynamical system normally refers to a region where each state can converge to the stable equilibrium point as time goes to infinity. Likewise, the ROA of a power system is viewed as a set of operating states such as rotor angles and frequencies able to converge to the stable equilibrium, which corresponds to the solution of power flow problem \cite{zz, Danwu,dhagash, Molzahn, aolaritei2017distributed, ali2017transversality}, after being subject to a disturbance. The corresponding convergent trajectory is regarded as a stable state trajectory. The estimation of ROA is basically dependent on the construction of Lyapunov function and its level set (see Fig. \ref{level}). In practice, constructing an analytic Lyapunov function for a nonlinear system is a challenging task.

The converse Lyapunov theorem \cite{kha96} allows for estimating the value of Lyapunov function without using its analytic form. For a power system that has a stable state trajectory $\mathbf{x}(t)$, $t \geq 0$, a commonly used converse Lyapunov function is $V(\mathbf{x})=\int_{0}^{\infty}\|\mathbf{x}(t)\|^2dt$ \cite{jone17}. We generalize converse Lyapunov function by introducing a more general function $\alpha(\cdot)$ (see the definition in Appendix \ref{app:def}) and a solution trajectory $\mathbf{\phi}(\cdot)$ in Lemma \ref{lya}. The existence of such generalized Lyapunov function is guaranteed in theory (i.e., Theorem $4.17$ in \cite{kha96}) as follows.
\begin{lemma}\label{lya}
Without loss of generality, let $\mathbf{x=0}$ be an asymptotically stable equilibrium point for the nonlinear system $\dot{\mathbf{x}}\mathbf{=f(x)}$, where $\mathbf{f}: X\rightarrow{R^n}$ is locally Lipschitz, and $S$ is the region of attraction, then there is a continuous positive definite function $W(\mathbf{x})$ such that
$$
V(\mathbf{x})=\int_{0}^{\infty}\alpha(\|\mathbf{\phi}(\mathbf{x},t)\|)dt, \quad V(\mathbf{0})=0
$$
and
\begin{equation} \label{eq:pror1}
 \frac{dV(\mathbf{x})}{dt}=\frac{\partial V(\mathbf{x})}{\partial \mathbf{x}}\mathbf{f}(\mathbf{x})\leq -W(\mathbf{x}), ~\forall~\mathbf{x} \in S
\end{equation}
with
\begin{equation} \label{eq:pror2}
\frac{d\mathbf{\phi}(\mathbf{x},t)}{dt}=\mathbf{f}(\mathbf{\phi}(\mathbf{x},t)), \quad\mathbf{\phi}(\mathbf{x},0)=\mathbf{x}.
\end{equation}
$\alpha(z)$ is a class $\Gamma$ function (see Appendix \ref{app:def}),
and the level set $\Omega_c=\{\mathbf{x}\in R^n~|~V(\mathbf{x})\leq c,~\forall~c>0 \}$ is a compact subset of $S=\{\mathbf{x}\in R^n |\lim_{t\rightarrow{+\infty}}\mathbf{\phi}(\mathbf{x},t)=\mathbf{0}\}$.
\end{lemma}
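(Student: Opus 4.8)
The plan is to recognize that this statement is the converse Lyapunov theorem (Theorem~4.17 in \cite{kha96}) with the quadratic running cost $\norm{\mathbf{x}}^2$ replaced by a general class $\Gamma$ function $\alpha(\norm{\mathbf{x}})$, so the proof reuses that machinery and then verifies the orbital bound with the explicit choice $W(\mathbf{x})=\alpha(\norm{\mathbf{x}})$. Concretely, I would split the argument into four parts: (i) $V$ is finite and well defined on $S$; (ii) $V$ is continuous and positive definite with $V(\mathbf{0})=0$; (iii) the orbital derivative equals $-\alpha(\norm{\mathbf{x}})$, which supplies the required $W$; and (iv) each sublevel set $\Omega_c$ is a compact subset of $S$.

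For (i), asymptotic stability of the origin together with local Lipschitzness of $\mathbf{f}$ yields, on every compact subset of the region of attraction $S$, a class $\mathcal{KL}$ estimate $\norm{\mathbf{\phi}(\mathbf{x},t)}\le\beta(\norm{\mathbf{x}},t)$. Choosing the class $\Gamma$ function $\alpha$ so that $t\mapsto\alpha(\beta(r,t))$ is integrable for each fixed $r$ --- a Massera-type construction --- then makes $V(\mathbf{x})=\int_{0}^{\infty}\alpha(\norm{\mathbf{\phi}(\mathbf{x},t)})\,dt$ finite for every $\mathbf{x}\in S$; the familiar special case $\alpha(z)=z^2$ of the text is recovered whenever the origin is exponentially stable. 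For (ii), $V(\mathbf{0})=0$ is immediate since $\mathbf{\phi}(\mathbf{0},t)\equiv\mathbf{0}$; positive definiteness follows because for $\mathbf{x}\neq\mathbf{0}$ uniqueness of solutions forces $\norm{\mathbf{\phi}(\mathbf{x},t)}>0$ on a short time interval where the integrand is strictly positive; and continuity of $V$ on $S$ comes from continuous dependence of $\mathbf{\phi}$ on the initial state combined with a dominated-convergence argument using $\alpha(\beta(\cdot,\cdot))$ as the integrable majorant. For (iii), the semigroup identity $\mathbf{\phi}(\mathbf{\phi}(\mathbf{x},s),t)=\mathbf{\phi}(\mathbf{x},s+t)$ gives $V(\mathbf{\phi}(\mathbf{x},s))=\int_{s}^{\infty}\alpha(\norm{\mathbf{\phi}(\mathbf{x},\tau)})\,d\tau$; differentiating in $s$ at $s=0$ yields $\dot V(\mathbf{x})=\frac{\partial V}{\partial\mathbf{x}}\mathbf{f}(\mathbf{x})=-\alpha(\norm{\mathbf{x}})$, so one sets $W(\mathbf{x})=\alpha(\norm{\mathbf{x}})$, which is continuous and positive definite precisely because $\alpha$ is class $\Gamma$. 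For (iv), $\Omega_c\subseteq S$ holds by construction since $V$ is only defined on $S$; $\Omega_c$ is closed as the preimage of $[0,c]$ under the continuous map $V$; and compactness reduces to showing that $V$ is proper relative to $S$, i.e. $V(\mathbf{x})\to\infty$ both as $\norm{\mathbf{x}}\to\infty$ within $S$ and as $\mathbf{x}$ approaches $\partial S$ --- the latter because a trajectory launched near the boundary must linger away from the origin for an arbitrarily long time before converging, driving the integral to infinity. This last point can also be imported directly from Theorem~4.17 in \cite{kha96}.

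I expect the main obstacle to be the tension in part~(i) between the desired generality of $\alpha$ and convergence of the defining integral: when the origin is only asymptotically (not exponentially) stable, $\beta(r,\cdot)$ may decay arbitrarily slowly, so $\alpha$ cannot be taken completely arbitrary and must be matched to $\beta$ through a Massera-lemma construction; making this explicit --- and checking that the resulting $\alpha$ still qualifies as a class $\Gamma$ function in the sense of Appendix~\ref{app:def} --- is the delicate step. The properness claim underlying the compactness of $\Omega_c$ in part~(iv) is the second most delicate point; everything else --- the semigroup computation of $\dot V$, positive definiteness, and continuity --- is routine once continuous dependence on initial conditions is in hand.
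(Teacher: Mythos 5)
Your proposal is correct in outline, but it takes a different route from the paper: the paper does not prove this lemma from scratch at all --- it imports the existence, smoothness, and decay property of $V$ wholesale from Theorem~4.17 of \cite{kha96}, and only the final claim (that $\Omega_c$ sits inside $S$ and exhausts it as $c\to\infty$) is argued directly, in Remark~\ref{app:levelset}, via the elementary observation that finiteness of $\int_0^\infty\alpha(\|\mathbf{\phi}(\mathbf{x},t)\|)\,dt$ forces $\mathbf{\phi}(\mathbf{x},t)\to\mathbf{0}$. What your reconstruction buys is transparency on exactly the point the paper leaves implicit: as you note, for a merely asymptotically stable equilibrium the integral defining $V$ need not converge for an \emph{arbitrary} class $\Gamma$ function, so $\alpha$ must be matched to the $\mathcal{KL}$ decay estimate by a Massera-type lemma. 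The paper sidesteps this because its working choice is $\alpha(z)=z^2$ together with a Hurwitz linearization (see Lemma~\ref{lem:bound}), under which exponential decay makes convergence automatic; your flag is nonetheless a real caveat on the lemma as stated. Two smaller remarks on your sketch: the semigroup computation gives the derivative of $V$ along trajectories, but the Lie-derivative form $\frac{\partial V}{\partial\mathbf{x}}\mathbf{f}(\mathbf{x})$ in \eqref{eq:pror1} additionally requires differentiability of $V$ in $\mathbf{x}$, which is part of what the citation to Khalil supplies and is not free from your construction; and your $W(\mathbf{x})=\alpha(\|\mathbf{x}\|)$ actually yields equality in \eqref{eq:pror1}, which is stronger than (and consistent with) the stated inequality. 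With those points acknowledged, your four-part decomposition is a sound, self-contained substitute for the citation.
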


In Lemma \ref{lya}, the dynamics $\dot{\mathbf{x}} = \mathbf{f}(\mathbf{x})$ of a power system normally represent that of generators which are known as swing equations \cite{kun94}. Here we explicitly assume that the system dynamics are available, although our framework can be generally extended to incorporate uncertainties with a predefined complexity level, i.e., the smoothness of the unknown components, in the system's dynamical model. Property \eqref{eq:pror1} implies that the function $V(\mathbf{x})$ decays over time. Equation \eqref{eq:pror2} defines a stable trajectory $\mathbf{\phi}(\mathbf{x}, t)$ with the initial state $\mathbf{x}$. More importantly, the Lyapunov function $V(\mathbf{x})$ in Lemma \ref{lya} has a nice property that, by increasing the level set of $V(\mathbf{x})$, we can approach the real region of attraction (see Remark \ref{app:levelset} for a formal explanation). Note that this property may not hold for any analytical Lyapunov function. Thus, we can leverage the converse Lyapunov function approach to obtain a better ROA by collecting more sampling points in order to enlarge the level sets.
\begin{figure}\centering
\scalebox{0.05}[0.05]{\includegraphics{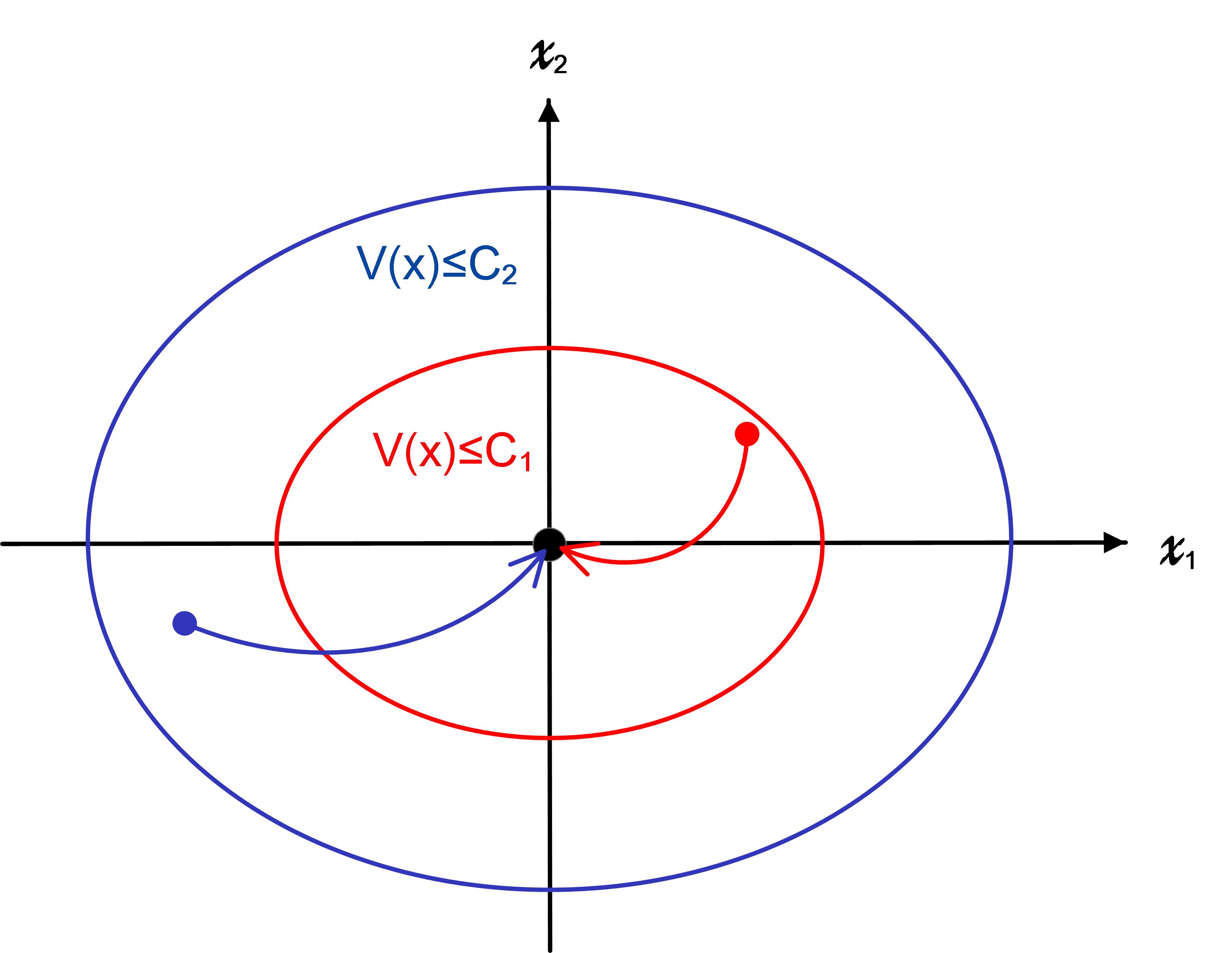}}
\caption{\label{level} Illustration on the level set of Lyapunov function. The red ellipse describes the level set $\{\mathbf{x}\in R^2~|~V(\mathbf{x})\leq C_1\}$, while the blue one denotes the level set $\{\mathbf{x}\in R^2~|~V(\mathbf{x})\leq C_2\}$ with $C_1<C_2$ and $\mathbf{x}=(x_1,x_2)$.
The black dot represents a stable equilibrium point of dynamical system. Each state in the level set can converge to the equilibrium point on condition of $\dot{V}(\mathbf{x})<0$.}
\end{figure}

Due to the discrete nature of sampling data, the converse Lyapunov function $V(\mathbf{x})$ in Lemma \ref{lya} can be discretized and approximated as follows
\begin{equation}\label{Vest}
    \hat{V}(\mathbf{x})=\sum_{i=1}^{n}\alpha(\|\mathbf{\phi}(\mathbf{x},t_i)\|)\Delta t,
\end{equation}
where $\Delta t$ denotes the sampling time interval
and $t_i=(i-1)\Delta t$, $i\in \{1,2,...,n\}$. The error caused by the above discretization can be further estimated (see Lemma \ref{lem:bound} in Appendix \ref{app:upbound}).

While the approximated $\hat{V}(\mathbf{x})$ can be calculated directly based on samples, the explicit Lyapunov function $V(\mathbf{x})$ is unknown. This work therefore learns this unknown Lyapunov function $V(\mathbf{x})$ from the discretized counterpart. By treating $V(\mathbf{x})$ and $\hat{V}(\mathbf{x})$ as a GP and its measurement, respectively, the estimation error $V(\mathbf{x})-\hat{V}(\mathbf{x})$ can be regarded as the measurement noise. This enables us to learn the unknown Lyapunov function $V(\mathbf{x})$ using the GP approach.

\section{GP for Learning Unknown Dynamics}\label{sec:gp}
The above section introduces the discretized, approximated Lyapunov function and the respective unknown Lyapunov function. In this section, we propose the GP approach for learning the unknown Lyapunov function (i.e., GP regression). Normally, a general GP regression requires a prior distribution of unknown functions specified by a mean function, a covariance function, and the probability of the observations and sampling data to obtain the posterior distribution. Without loss of generality, we consider an unknown function $h(\mathbf{x})$ as a GP, which can be sequentially measured by
\begin{equation*}
    y^{(i)} = h(\mathbf{x}^{(i)})+\epsilon, \quad i\in Z^{+}
\end{equation*}
where $y^{(i)}$ refers to the observed function value for the input $\mathbf{x}^{(i)}$ at the $i$-th sampling step, and the measurement noise $\epsilon$ is zero-mean, independent and bounded by $\sigma$. With the GP approach, we can obtain the posterior distribution over $h(\mathbf{x})$ by using sampling data in the training set $\{(\mathbf{x}^{(1)},y^{(1)}),(\mathbf{x}^{(2)},y^{(2)}),...,(\mathbf{x}^{(i)},y^{(i)})\}$.

In this work, it is assumed that the unknown Lyapunov function is a GP prior or its ``complexity" can be measured by the RKHS norm. The sampling data are obtained by implementing the GP-UCB based sampling algorithm. Note that, for an existing Lyapunov function, the GP approach allows to extend the certified ROA with a given confidence level.

\subsection{Gaussian process and RKHS norm}
\begin{figure}\centering
\scalebox{0.07}[0.07]{\includegraphics{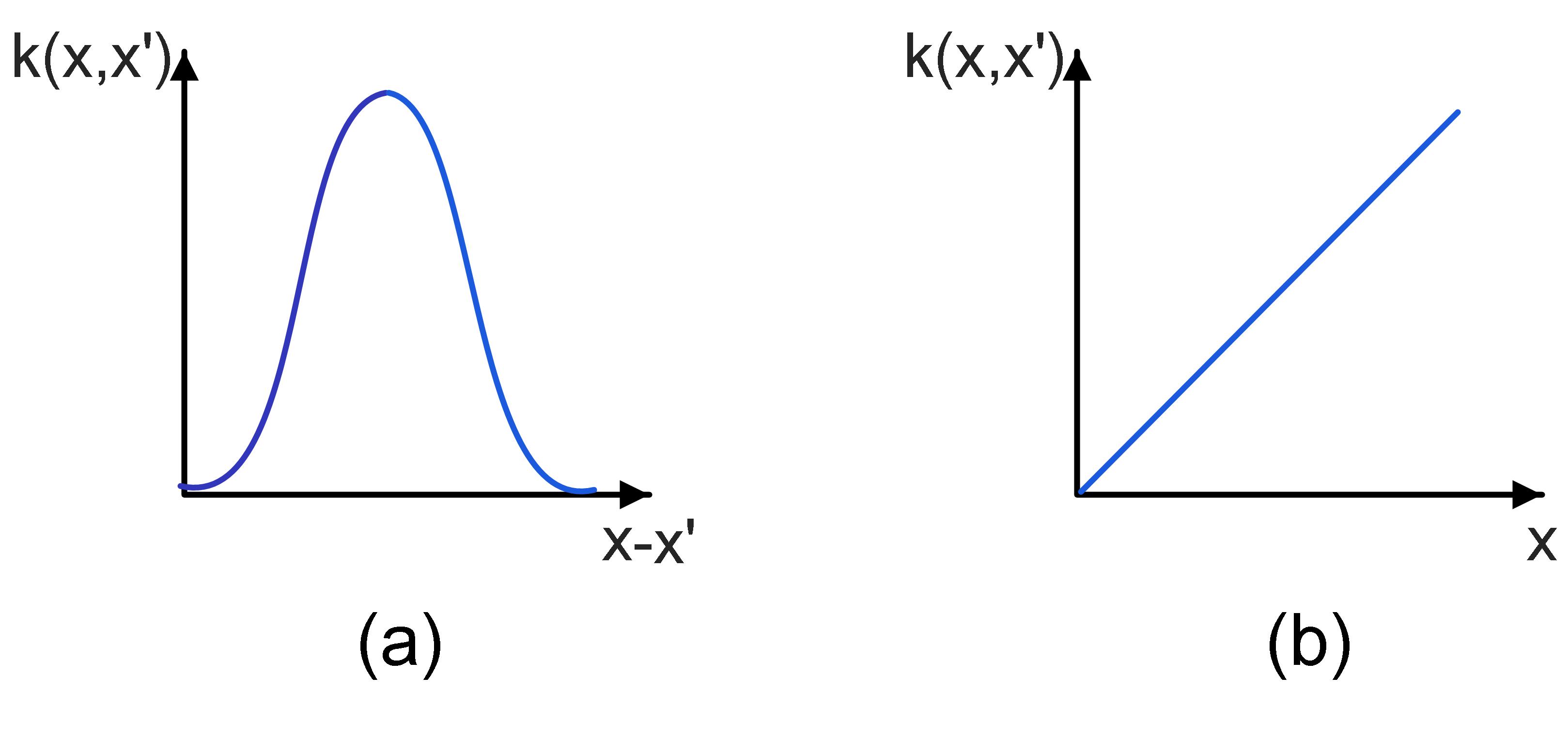}}
\caption{\label{kernel} Examples of basic kernel functions: (a) Squared Exponential Kernel $k(\mathbf{x},\mathbf{x}')=e^{-\|\mathbf{x}-\mathbf{x}'\|^2/2l^2}$ with a length scale parameter $l$ and (b) Linear Kernel $k(\mathbf{x},\mathbf{x}')=\mathbf{x}^T\mathbf{x}'$ with $\mathbf{x}'=1$. Each kernel allows to approximate the unknown function with a certain ``complexity".}
\end{figure}

By regarding the values of $V(\mathbf{x})$ as random variables, any finite collection of them is multivariate distributed in an overall consistent way. The unknown Lyapunov function $V(\mathbf{x})$ can be approximated by a GP. Note that the covariance or kernel function $k(\mathbf{x},\mathbf{x}')$ encodes the smoothness property of $V(\mathbf{x})$ from the GP (see Fig.~\ref{kernel}). For a sample from a known GP distribution $\hat{V}_N=[\hat{V}(\mathbf{x}^{(1)}),...,\hat{V}(\mathbf{x}^{(N)})]^T$ at points $\mathcal{A}_N=\{\mathbf{x}^{(1)},...,\mathbf{x}^{(N)}\}$, $\hat{V}(\mathbf{x}^{(i)})=V(\mathbf{x}^{(i)})+\epsilon$ with $\epsilon\sim N(0,\sigma^2)$, there are the analytic formulas for mean $\mu_N(\mathbf{x})$, covariance $k_N(\mathbf{x},\mathbf{x}')$ and variance $\sigma_N^2(\mathbf{x})$ of the posterior distribution as follows \cite{sri12}
\begin{equation}\label{mu_sig}
    \begin{split}
        \mu_N(\mathbf{x})&=k_N(\mathbf{x})^T(K_N+\sigma^2I)^{-1}\hat{V}_N \\
       k_N(\mathbf{x},\mathbf{x}')&=k(\mathbf{x},\mathbf{x}')-k_N(\mathbf{x})^T(K_N+\sigma^2I)^{-1}k_N(\mathbf{x}') \\
       \sigma_N^2(\mathbf{x})&=k_N(\mathbf{x},\mathbf{x})
    \end{split}
\end{equation}
where $k_N(\mathbf{x})=[k(\mathbf{x}^{(1)},\mathbf{x}),...,k(\mathbf{x}^{(N)},\mathbf{x})]^T$ and $K_N$ is the positive definite kernel matrix $[k(\mathbf{x},\mathbf{x}')]_{\mathbf{x},\mathbf{x}'\in \mathcal{A}_N}$.

If the prior distribution over $V(\mathbf{x})$ is unknown, $V(\mathbf{x})$ can be approximated by the linear combination of kernel functions in the RKHS, i.e. $V(\mathbf{x})=\sum_{i}c_i k(\mathbf{x},\mathbf{x}^{(i)})$.
The RKHS $H_k(X)$ with the kernel $k(\mathbf{x},\mathbf{x}')$ is a complete subspace of $L_2(X)$, and its inner product $\langle\cdot,\cdot\rangle_k$ is endowed with the reproducing property: $\langle V(\mathbf{x}),k(\mathbf{x},\cdot)\rangle_k=V(\mathbf{x})$, $\forall~V(\mathbf{x})\in H_k(X)$. Moreover, there exists a uniquely associated RKHS $H_k(X)$ for each kernel $k(\mathbf{x},\mathbf{x}')$ \cite{kan18}. The induced RKHS norm $\|V\|_k=\sqrt{\langle V,V\rangle_k}$ is used to quantify the smoothness of the function $V$ with respect to the kernel $k(\mathbf{x},\mathbf{x}')$. In brief, the function $V(\mathbf{x})$ gets smoother as $\|V\|_k$ decreases. In this work, we consider that Lyapunov functions $V(\mathbf{x})$ have relatively low ``complexity" or high smoothness, which can be measured by the RKHS norm.

\begin{remark}
There is an interesting connection between the kernel function in RKHS and the covariance function of a GP. If the kernel function in RKHS is the same as the covariance function of a GP, the posterior mean of a GP is equivalent to the estimator of kernel ridge regression in RKHS \cite{kim70}. In addition, the posterior variance of a GP can be regarded as a worst case error in RKHS.
\end{remark}

Within the GP approach, how one samples plays an important role because the sampling rule affects the confidence level of the estimated ROA, which is the probability at least with which a trajectory initiated from an inner state of such estimated ROA will converge to the corresponding stable equilibrium. In other words, that is the probability that an estimated ROA is valid. For this work, we use the GP-UCB based sampling rule. Compared with other heuristics in GP optimization \cite{moc75,moc12,bro10}, the GP-UCB based sampling rule is able to deal with the trade-off between exploitation for optimizing the objective function and exploration for reducing the uncertainty with the guaranteed theoretical performance.

\subsection{GP-UCB based algorithm}
For a given value of $\delta \in (0,1)$ and the sampling domain $X\in R^n$, which is a subset of the state space, our goal is to maximize the region of attraction, wherein each point converges to the origin with probability at least the confident level of $1-\delta$. Thus, a GP-UCB based algorithm is developed in Table \ref{tab:gpa} to select the sampling points for enlarging the ROA with a guaranteed confidence level. Specifically, a sampling point $\mathbf{x}^{(i)}$ is selected in $X$ by searching for the maxima of $\mu_{i-1}(\mathbf{x})+\beta^{1/2}_{i} \sigma_{i-1}(\mathbf{x})$, where the term $\mu_{i-1}(\mathbf{x})$ contributes to enlarge the level set of Lyapunov function and the term $\sigma_{i-1}(\mathbf{x})$ allows to reduce the uncertainty of sampling region.

Essentially, the sampling rule aims to reconcile the trade-off between the exploitation for enlarging the ROA and the exploration for reducing the uncertainty of sampling region. Let the sampling point $\mathbf{x}^{(i)}$ serve as the initial condition of nonlinear system $\dot{\mathbf{x}}=\mathbf{f}(\mathbf{x})$, and this allows to generate a state trajectory $\mathbf{\phi}(\mathbf{x}^{(i)},t)$, $t\geq0$. If this state trajectory can converge to the origin, the point $\mathbf{x}^{(i)}$ is called as a stable sampling point. Then the value of Lyapunov function at $\mathbf{x}^{(i)}$ is estimated by $\hat{V}(\mathbf{x}^{(i)})$ with (\ref{Vest}). By choosing $\{(\mathbf{x}^{(1)},\hat{V}(\mathbf{x}^{(1)})),...,(\mathbf{x}^{(i)},\hat{V}(\mathbf{x}^{(i)}))\}$ as the training set, $\mu_i(\mathbf{x})$ and $\sigma_i(\mathbf{x})$ for the unknown Lyapunov function $V(\mathbf{x})$ can be updated according to (\ref{mu_sig}). On the other hand, if the state trajectory $\mathbf{\phi}(\mathbf{x}^{(i)},t)$, $t\geq0$ fails to converge to the origin, the sampling point $\mathbf{x}^{(i)}$ is removed from the sampling region $X$. The above process does not terminate until it achieves the specified number of stable sampling points.

\begin{table}
 \caption{\label{tab:gpa} GP-UCB based Algorithm.}
 \begin{center}
 \begin{tabular}{lcl} \hline
 \textbf{Input:}  $X\in R^n$, $\delta$, $\xi$, $t_n$, $\mu_0$, $\sigma_0$, $k(\mathbf{x},\mathbf{x}')$, $i=1$ \\
 \textbf{Output:} $\mathbf{x}^{(i)}$, $\mathbf{\phi}(\mathbf{x}^{(i)},t)$, $\hat{V}(\mathbf{x}^{(i)})$, $\mu_i(\mathbf{x})$, $\sigma_i(\mathbf{x})$, $i\in\{1,2,...,N\}$ \\ \hline
  1: ~~\textbf{while} $(i\leq N)$ \\
  2:~~~~~~~ Choose $\mathbf{x}^{(i)}=\text{argmax}_{\mathbf{x}\in X}\left[\mu_{i-1}(\mathbf{x})+\beta^{1/2}_{i} \sigma_{i-1}(\mathbf{x})\right]$ \\
  3:~~~~~~~ Generate a state trajectory $\mathbf{\phi}(\mathbf{x}^{(i)},t)$, $t\geq0$ \\
  4:~~~~~~~ \textbf{if} ($\|\mathbf{\phi}(\mathbf{x}^{(i)},t_n)\|<\xi$) \\
  5:~~~~~~~~~~~ Sample $\hat{V}(\mathbf{x}^{(i)})=V(\mathbf{x}^{(i)})+\epsilon$ with (\ref{Vest}) \\
  6:~~~~~~~~~~~ Update $\mu_{i}(\mathbf{x})$ and $\sigma_{i}(\mathbf{x})$ with (\ref{mu_sig}) \\
  7:~~~~~~~~~~~ Update $i=i+1$ \\
  8:~~~~~~~ \textbf{else} \\
  9:~~~~~~~~~~~ Update $X=X/\{\mathbf{x}^{(i)}\}$. \\
  10:~~~~~~ \textbf{end if} \\
  11: ~\textbf{end while} \\ \hline
 \end{tabular}
 \end{center}
\end{table}

\begin{figure}\centering
 {\includegraphics[width=0.45\textwidth]{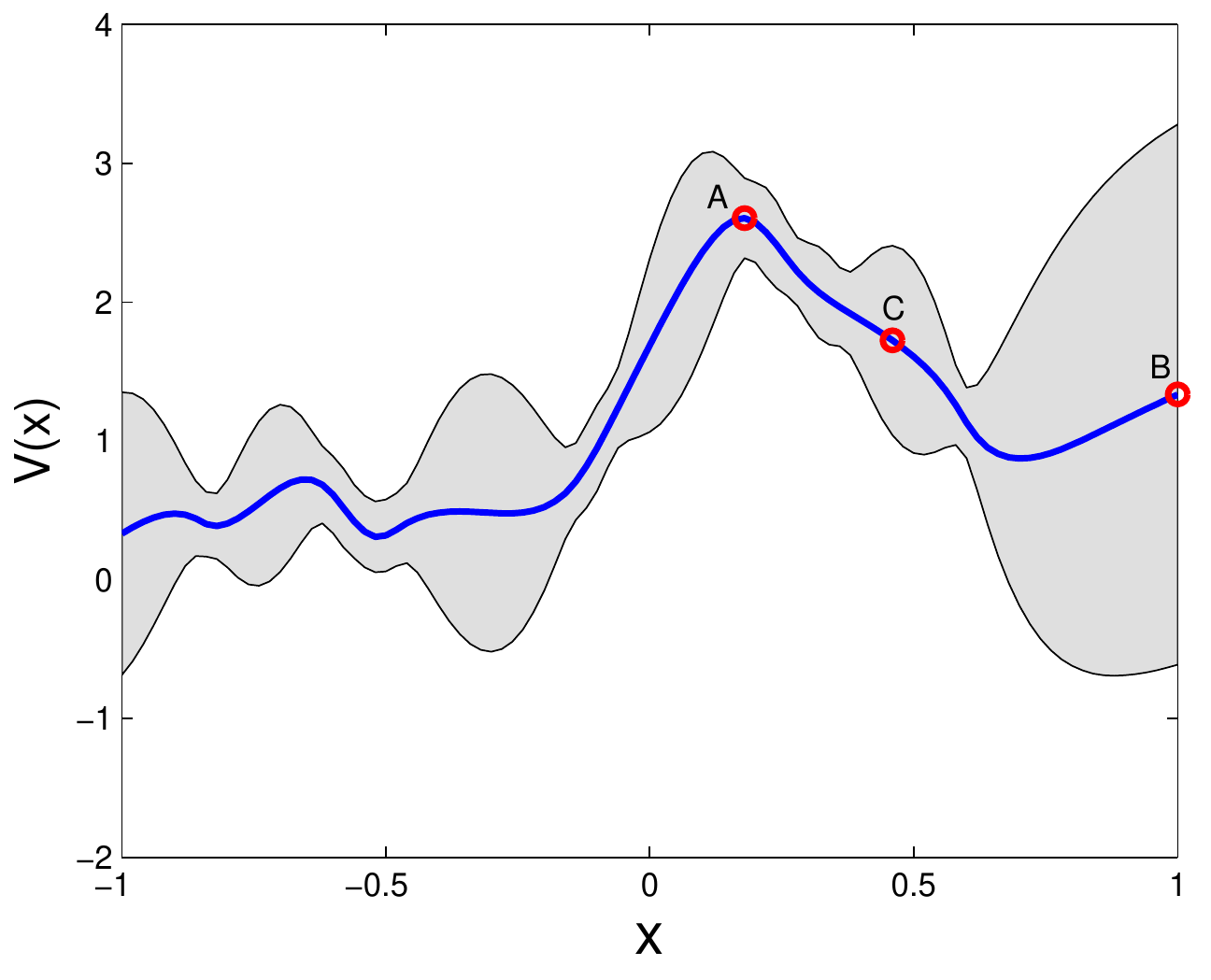}}
 \caption{\label{gproa} Illustration on three different sampling rules. The blue line denotes mean values of the unknown function $V(\mathbf{x})$, and the gray shade describes the $95\%$ confidence interval. The three red circles $A$, $B$ and $C$ represent three different sampling rules, respectively. $A$ prefers the point with the maximum posterior mean value in order to achieve the large $V(\mathbf{x})$, and $B$ selects the point with the maximum posterior variance in order to reduce the uncertainty. $C$ aims to allow for both the exploitation for large $V(\mathbf{x})$ and the exploration for eliminating the uncertainty.}
\end{figure}

Figure \ref{gproa} illustrates three different sampling schemes, i.e., Scheme A, Scheme B, Scheme C which correspond to sampling points A, B, C, respectively. Mathematically, Scheme $A$ can be described as $\mathbf{x}^{(i)}=\text{argmax}_{\mathbf{x}\in X}\mu_{i-1}(\mathbf{x})$ and Scheme $B$ is expressed as $\mathbf{x}^{(i)}=\text{argmax}_{\mathbf{x}\in X}\sigma_{i-1}(\mathbf{x})$.
In practice, Scheme $A$ is too greedy and tends to get local optima, while Scheme $B$ provides a good rule for exploring $V(\mathbf{x})$ globally \cite{sri12}. As a result, Scheme $C$ is designed to integrate $A$ with $B$ by adopting $\mathbf{x}^{(i)}=\text{argmax}_{\mathbf{x}\in X}J_i(\mathbf{x})$, where the reward function $J_i(\mathbf{x})$ is given by
$$
J_i(\mathbf{x})=\mu_{i-1}(\mathbf{x})+\beta^{1/2}_i\sigma_{i-1}(\mathbf{x}).
$$
In addition, the parameter $\beta_i$ depends on the ``complexity" of Lyapunov function, the sample size and information gain (see Appendix \ref{app:gain}).

\begin{remark}
To obtain the global maximum of $J_i(\mathbf{x})$ in the sampling region $X$ is generally infeasible due to the non-convexity of $J_i(\mathbf{x})$. In practice, there are multiple ways to heuristically search for its local maxima. For instance, the local maxima can be readily identified from the finite historic dataset of sampling points instead of the region $X$. In addition, it is expected to approach the local maxima along the gradient ascent of $J_i(\mathbf{x})$, and the gradient can be approximated with the finite difference method \cite{sto13}.
\end{remark}


\begin{remark}
For a certified ROA, it is sufficient to judge a stable
sampling point if the corresponding state trajectory can overpass the boundary of this ROA without checking its convergence to the stable equilibrium point. This will reduce the computation time for selecting the stable sampling points.
\end{remark}

\section{Main Results}\label{sec:main}
This section presents main theoretical results on the construction and evaluation of ROA with a given confidence level by using the GP-UCB based algorithm.

\begin{theorem}\label{betaT}
Let $\delta\in(0,1)$, and the measurement noise is bounded by $\sigma$. Then it holds with probability at least $1-\delta$ that
\begin{equation*}
    |V(\mathbf{x})-\mu_{N-1}(\mathbf{x})|\leq \beta^{1/2}_N \sigma_{N-1}(\mathbf{x}), \quad \forall \mathbf{x}\in X, ~~\forall N\in Z^{+}
\end{equation*}
where $\beta_N=2\|V\|_k^2+300\gamma_N\ln^3{(N/\delta)}$.
\end{theorem}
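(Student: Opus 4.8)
The plan is to recognize Theorem~\ref{betaT} as an instance of the standard GP-UCB confidence-bound result of Srinivas et al.~\cite{sri12} (specifically the agnostic / RKHS setting, Theorem~6 there), and to verify that the hypotheses of that theorem are met in our setup. The key point is that we have assumed the unknown Lyapunov function $V(\mathbf{x})$ lies in the RKHS $H_k(X)$ with bounded norm $\|V\|_k$, and that the measurements $\hat{V}(\mathbf{x}^{(i)}) = V(\mathbf{x}^{(i)}) + \epsilon$ are corrupted by zero-mean noise bounded by $\sigma$. Under exactly these conditions, the posterior mean $\mu_{N-1}$ and posterior standard deviation $\sigma_{N-1}$ computed from \eqref{mu_sig} satisfy a uniform (over $\mathbf{x}\in X$ and over all horizons $N$) concentration inequality of the claimed form, with the inflation factor $\beta_N = 2\|V\|_k^2 + 300\,\gamma_N \ln^3(N/\delta)$, where $\gamma_N$ is the maximum information gain after $N$ rounds (defined in Appendix~\ref{app:gain}).

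First I would set up the correspondence: the unknown function $h(\mathbf{x})$ from Section~\ref{sec:gp} is taken to be $V(\mathbf{x})$, the observations $y^{(i)}$ are the discretized estimates $\hat{V}(\mathbf{x}^{(i)})$, and the observation noise $\epsilon^{(i)} = \hat{V}(\mathbf{x}^{(i)}) - V(\mathbf{x}^{(i)})$ is zero-mean and almost-surely bounded by $\sigma$ (this is where Lemma~\ref{lem:bound} in Appendix~\ref{app:upbound} enters — it gives the discretization error bound that justifies taking $\sigma$ as the noise bound). Second, I would invoke the RKHS concentration argument: decompose the pointwise error $|V(\mathbf{x}) - \mu_{N-1}(\mathbf{x})|$ into a deterministic bias term, controlled by $\|V\|_k \cdot \sigma_{N-1}(\mathbf{x})$ via the reproducing property and the Cauchy--Schwarz inequality in $H_k$, plus a stochastic term coming from the noise, controlled using a martingale / self-normalized concentration bound whose fluctuation scales like $\sigma_{N-1}(\mathbf{x})$ times a factor involving $\gamma_N$ and $\ln(N/\delta)$. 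Third, I would take a union bound over $N \in Z^{+}$ (using a $\delta/N^2$-type allocation, which is absorbed into the $\ln^3(N/\delta)$ term) so that the statement holds simultaneously for all $N$ with probability at least $1-\delta$, and collect constants to arrive at the stated $\beta_N$.

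The main obstacle is the stochastic term: making the bound uniform over the \emph{entire} continuous domain $X$ (not just the sampled points) and over all $N$ simultaneously requires the self-normalized tail bound for vector-valued martingales together with a discretization/covering argument on $X$, and it is there that the information gain $\gamma_N$ and the cubic-logarithm factor arise. Rather than reproving this from scratch, I would cite the corresponding result in \cite{sri12} (and its refinements) and restrict the write-up to checking that our noise model and RKHS assumption are exactly the ones required, and to tracking how the discretization error from \eqref{Vest} fits the bounded-noise hypothesis. A minor technical point to address is that the bound must be stated for $\mu_{N-1}, \sigma_{N-1}$ (one-step-behind quantities, matching how the algorithm in Table~\ref{tab:gpa} selects $\mathbf{x}^{(i)}$ using $\mu_{i-1}, \sigma_{i-1}$), which is purely a matter of indexing and does not change the argument.
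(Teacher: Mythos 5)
Your proposal is correct and matches the paper's approach: the paper's entire proof is a one-line citation of Theorem~6 in \cite{sri12}, which is exactly the reduction you identify (with the same hypothesis-checking on the RKHS norm bound and the bounded noise from Lemma~\ref{lem:bound}). Your additional sketch of how that cited theorem is itself proved goes beyond what the paper records, but the argument is the same.
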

\begin{proof}
The results follow directly from Theorem 6 in \cite{sri12}.
\end{proof}

\begin{remark}
The RKHS norm $\|\cdot\|_k$ characterizes the ``complexity" or ``smoothness" of the function in RKHS. If the bound of $\|V\|^2_k$ is unknown a prior, it can be computed using kernel ridge regression as $\|V\|^2_k=\hat{V}^T_NP^T\text{diag}(\lambda_i(\lambda_i+N\theta)^{-2})P\hat{V}_N$,
where $K_N=P^T\text{diag}(\lambda_i)P$ and $\text{diag}(\lambda_i)\in R^{N\times N}$ denotes a diagonal matrix with the diagonal elements $\lambda_i$, $i=1,2,...N$. Moreover, $\theta$ is a positive tunable parameter for controlling the smoothness of $V(\mathbf{x})$ to avoid overfitting. A larger value of $\theta$ leads to the smoother function $V(\mathbf{x})$ (see Appendix \ref{app:RKHS}).
\end{remark}

For a stable equilibrium point, the Lyapunov function $V(\mathbf{x})$ is viewed as a GP, and Theorem \ref{betaT} allows to estimate the ROA of nonlinear system with a given confidence level. If the ROA $\Omega^\star$ is already established according to an existing Lyapunov function $V^\star(\mathbf{x})$, the proposed approach is applied to evaluate the stability of the region outside the ROA $\Omega^\star$ by treating the mismatch between $V(\mathbf{x})$ and $V^\star(\mathbf{x})$ as a GP. Thus, theoretical results are summarized as follows.

\begin{theorem}\label{theo}
Let $\delta\in(0,1)$ and $V^\star(\mathbf{x})$ be an existing Lyapunov function for nonlinear system $\dot{\mathbf{x}}=\mathbf{f(x)}$. With GP-UCB based Algorithm in Table \ref{tab:gpa},
it holds that
$$
\text{Prob}(\mathbf{x}\in S)\geq 1-\delta, \quad \forall \mathbf{x} \in S_{\delta,N}
$$
where $S_{\delta,N}$ is given by
$$
\left\{\mathbf{x}\in X ~|~ V^\star(\mathbf{x})+\mu_{N-1}(\mathbf{x})+\bar{\beta}^{1/2}_N \sigma_{N-1}(\mathbf{x})\leq C_{\max,N}\right\}
$$
with $C_{\max,N}=\max_{\mathbf{x}^{(i)}\in \mathcal{A}_N}\hat{V}(\mathbf{x}^{(i)})$ and
$$
\bar{\beta}_N=2\left\|V(\mathbf{x})-V^\star(\mathbf{x})\right\|_k^2+300\gamma_N\ln^3{(N/\delta)}.
$$
For a stable equilibrium point, the above conclusion also holds with $V^\star(\mathbf{x})\equiv0$.
\end{theorem}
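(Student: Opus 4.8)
The plan is to reduce the claim to Theorem \ref{betaT} applied to the shifted function $g(\mathbf{x}) := V(\mathbf{x}) - V^\star(\mathbf{x})$ (which, following the text, is treated as a GP, equivalently an element of the RKHS $H_k$ with norm $\|V - V^\star\|_k$), and then to invoke the level-set property of the converse Lyapunov function from Lemma \ref{lya}. First I would set up the GP learning problem for $g$ instead of for $V$: the quantities produced by the algorithm in Table \ref{tab:gpa} are the discretized values $\hat{V}(\mathbf{x}^{(i)})$ of \eqref{Vest}, so the observations available for $g$ are $\hat{V}(\mathbf{x}^{(i)}) - V^\star(\mathbf{x}^{(i)}) = g(\mathbf{x}^{(i)}) + \big(\hat{V}(\mathbf{x}^{(i)}) - V(\mathbf{x}^{(i)})\big)$, i.e.\ $g$ measured with noise equal to the discretization error $\hat{V}(\mathbf{x}^{(i)}) - V(\mathbf{x}^{(i)})$. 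By the discretization error bound (Lemma \ref{lem:bound} in Appendix \ref{app:upbound}) this effective noise is bounded by $\sigma$, so the hypotheses of Theorem \ref{betaT} hold with $V$ replaced by $g$ and $\|V\|_k$ replaced by $\|g\|_k = \|V - V^\star\|_k$; the constant produced is exactly $\bar{\beta}_N = 2\|V - V^\star\|_k^2 + 300\gamma_N \ln^3(N/\delta)$.

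Second, from Theorem \ref{betaT} I would extract the event $E$, of probability at least $1-\delta$, on which
$$
\big|\,g(\mathbf{x}) - \mu_{N-1}(\mathbf{x})\,\big| \le \bar{\beta}_N^{1/2}\sigma_{N-1}(\mathbf{x}) \quad \text{for all } \mathbf{x}\in X \text{ and all } N\in Z^{+}.
$$
On $E$, for any $\mathbf{x}\in S_{\delta,N}$ one chains
$$
V(\mathbf{x}) = V^\star(\mathbf{x}) + g(\mathbf{x}) \le V^\star(\mathbf{x}) + \mu_{N-1}(\mathbf{x}) + \bar{\beta}_N^{1/2}\sigma_{N-1}(\mathbf{x}) \le C_{\max,N},
$$
the last inequality being precisely the membership condition defining $S_{\delta,N}$. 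Only the one-sided bound $g \le \mu_{N-1} + \bar{\beta}_N^{1/2}\sigma_{N-1}$ is needed, and because the confidence bound of Theorem \ref{betaT} is uniform in $\mathbf{x}$, the single event $E$ serves every point of $S_{\delta,N}$ at once.

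Third, I would invoke Lemma \ref{lya}: for every constant $c>0$ the level set $\Omega_c = \{\mathbf{x}\in R^n \mid V(\mathbf{x}) \le c\}$ is a compact subset of the true region of attraction $S$. Each sampled point $\mathbf{x}^{(i)}\in\mathcal{A}_N$ is a \emph{stable} sampling point, so its trajectory converges to the origin and the finite sum \eqref{Vest} is well defined; since $\alpha$ is positive, $C_{\max,N} = \max_{\mathbf{x}^{(i)}\in\mathcal{A}_N}\hat{V}(\mathbf{x}^{(i)})$ is a finite positive constant, and Lemma \ref{lya} applies with $c = C_{\max,N}$. Hence, on $E$, every $\mathbf{x}\in S_{\delta,N}$ satisfies $V(\mathbf{x})\le C_{\max,N}$, i.e.\ $\mathbf{x}\in\Omega_{C_{\max,N}}\subseteq S$; therefore $\text{Prob}(\mathbf{x}\in S)\ge \text{Prob}(E)\ge 1-\delta$. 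The last clause is the special case $V^\star\equiv 0$, for which $g = V$, $\|g\|_k = \|V\|_k$, $\bar{\beta}_N$ reduces to $\beta_N$ of Theorem \ref{betaT}, and the argument is identical, now invoking Theorem \ref{betaT} verbatim.

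I expect the only genuinely delicate point to be the bookkeeping that makes $S_{\delta,N}$ a legitimate subset of $S$ — confirming that $C_{\max,N}$ is a strictly positive finite constant so that the ``$\forall\,c>0$'' clause of Lemma \ref{lya} applies, and noting that the uniform-in-$\mathbf{x}$, uniform-in-$N$ nature of Theorem \ref{betaT} yields the pointwise statement $\text{Prob}(\mathbf{x}\in S)\ge 1-\delta$ without a further union bound. It is worth remarking that the GP-UCB rule of Table \ref{tab:gpa} enters the argument only through the sampling set $\mathcal{A}_N$ (hence through $\mu_{N-1}$, $\sigma_{N-1}$, and $C_{\max,N}$, which govern \emph{how large} the certified region is); the probabilistic certificate itself is agnostic to how the $\mathbf{x}^{(i)}$ are chosen, since the confidence bound of Theorem \ref{betaT} holds for any sampling rule.
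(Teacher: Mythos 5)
Your proposal is correct and follows essentially the same route as the paper's own proof: treat the mismatch $\Delta V=V-V^\star$ as the GP-learned function with observations $\hat{V}(\mathbf{x}^{(i)})-V^\star(\mathbf{x}^{(i)})$ and noise bounded via Lemma \ref{lem:bound}, apply Theorem \ref{betaT} to get the uniform confidence band with $\bar{\beta}_N$, and conclude via the level-set containment $\Omega_{C_{\max,N}}\subseteq S$ from Lemma \ref{lya}. The only cosmetic difference is that you prove the general case first and recover $V^\star\equiv 0$ as a corollary, whereas the paper argues in the opposite order.
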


\begin{proof}
See Appendix \ref{app:theo}.
\end{proof}


\begin{remark}
There are alternative schemes to characterize the mismatch between the converse Lyapunov function $V(\mathbf{x})$ and the existing Lyapunov function $V^\star(\mathbf{x})$ other than the difference scheme $\Delta V(\mathbf{x})=V(\mathbf{x})-V^\star(\mathbf{x})$. For example, it is also feasible to adopt the proportion scheme $V(\mathbf{x})/V^\star(\mathbf{x})$ as an unknown function for the GP learning. Essentially, a stable equilibrium point can be thought of as a special case of the ROA when the ROA shrinks into a point and thus the existing Lyapunov function becomes $V^\star(\mathbf{x})\equiv0$.
\end{remark}



\section{Numerical Simulations}\label{sec:sim}
This section presents simulation results using the GP-UCB based algorithm for both single machine infinite bus (SMIB) system and IEEE 39 bus system. First of all, swing dynamics of a power system are introduced as follows.

\subsection{Power system model}
Consider a $M$-bus power system described by a set of swing equations as follows \cite{mun13,sau98}
\begin{equation} \label{eq:swing}
    m_i\Ddot{\psi}_i+d_i\dot{\psi}_i=p_i-\sum_{j\in \mathcal{N}_{i}}\frac{1}{B_{ij}}\sin(\theta^\star_{ij}+\psi_{ij}),
\end{equation}
where $\psi_i=\theta_i-\theta^\star_{i}$, $\psi_{ij}=\psi_i-\psi_j$ and $\theta^\star_{ij}=\theta^\star_{i}-\theta^\star_{j}$. Note that $\theta_{i}$ denotes the angle of the generator at bus $i$, and the superscript $\star$ represents the steady state condition or the power flow solution. The swing equation \eqref{eq:swing} describes the evolution of the phase angle due to the power mismatch between the mechanical power $p_i$ and the electrical power. In addition, $\mathcal{N}_{i}$ is the neighbourhood set of bus $i$, and $B_{ij}$ denotes the susceptance of branch $ij$. The parameters $m_i$ and $d_i$ refer to inertia and damping coefficients for the machine, respectively. For a load bus $i$, we normally assume that $m_i=d_i=0$.  In the steady state condition, the mechanical power $p_i$ can be expressed as $p_i=\sum_{j\in\mathcal{N}_i} \,1 / B_{ij} \sin{\theta^\star_{ij}}$.
This allows us to obtain the following perturbed dynamic model
\begin{equation*}
    m_i\Ddot{\psi}_i+d_i\dot{\psi}_i=\sum_{j\in \mathcal{N}_{i}}\frac{1}{B_{ij}}\left[\sin{\theta^\star_{ij}}-\sin(\theta^\star_{ij}+\psi_{ij})\right].
\end{equation*}
A number of approaches to assess the stability of these systems are presented in \cite{pareek, dongchan,hungrobust,hungmulti, tuyen,tuyen2017analysis,Petr}. Here, we use an energy-like function for the above system of nonlinear differential equations as follows
\begin{equation*}
    \begin{split}
        V^\star(\dot{\mathbf{\psi}},\mathbf{\psi}) &=\frac{1}{2}\sum_{i=1}^{M}\sum_{j\in \mathcal{N}_i}\frac{1}{B_{ij}}\int_{0}^{\psi_{ij}}\left[\sin{\theta^\star_{ij}}-\sin(\theta^\star_{ij}+\tau)\right]d\tau \\
        &~+\frac{1}{2}\sum_{i=1}^{M}m_i\dot{\psi}^2_i
    \end{split}
\end{equation*}
and the origin (i.e., $\mathbf{\psi}=\dot{\mathbf{\psi}}=0$) is locally asymptotically stable, and the certified ROA can be estimated by \cite{mun13}
\begin{equation}\label{roa_est}
    \Omega^\star=\left\{(\mathbf{\psi},\dot{\mathbf{\psi}})~|~\mathbf{\psi}^TL\mathbf{\psi}+\dot{\mathbf{\psi}}^T\Lambda\dot{\mathbf{\psi}}\leq C_{\lambda}\right\},
\end{equation}
where $\mathbf{\psi}=(\psi_1,...,\psi_M)^T$ and $\dot{\mathbf{\psi}}=(\dot{\psi}_1,...,\dot{\psi}_M)^T$. In addition, $L$ denotes the Laplacian matrix of power network with weights $1/B_{ij}$, and $\Lambda$ is a diagonal matrix that satisfies $\Lambda=\text{diag}(\mathbf{m})$ with $\mathbf{m}=(m_1,...,m_M)^T$. Moreover, $C_{\lambda}$ is given by
$C_{\lambda}=\min_{ij}1/B_{ij}\left(2\cos{\lambda}-(\pi-2\lambda)\sin{\lambda}\right)$ with $\lambda=\max_{ij}|\theta^\star_{ij}|$.
For simplicity, the class $\Gamma$ function $\alpha(z)$ in Definition \ref{def} is chosen as $\alpha(z)=z^2$ in order to estimate the value of converse Lyapunov function.

\subsection{Single machine infinite bus system}
\begin{figure}\centering
 {\includegraphics[width=0.42\textwidth]{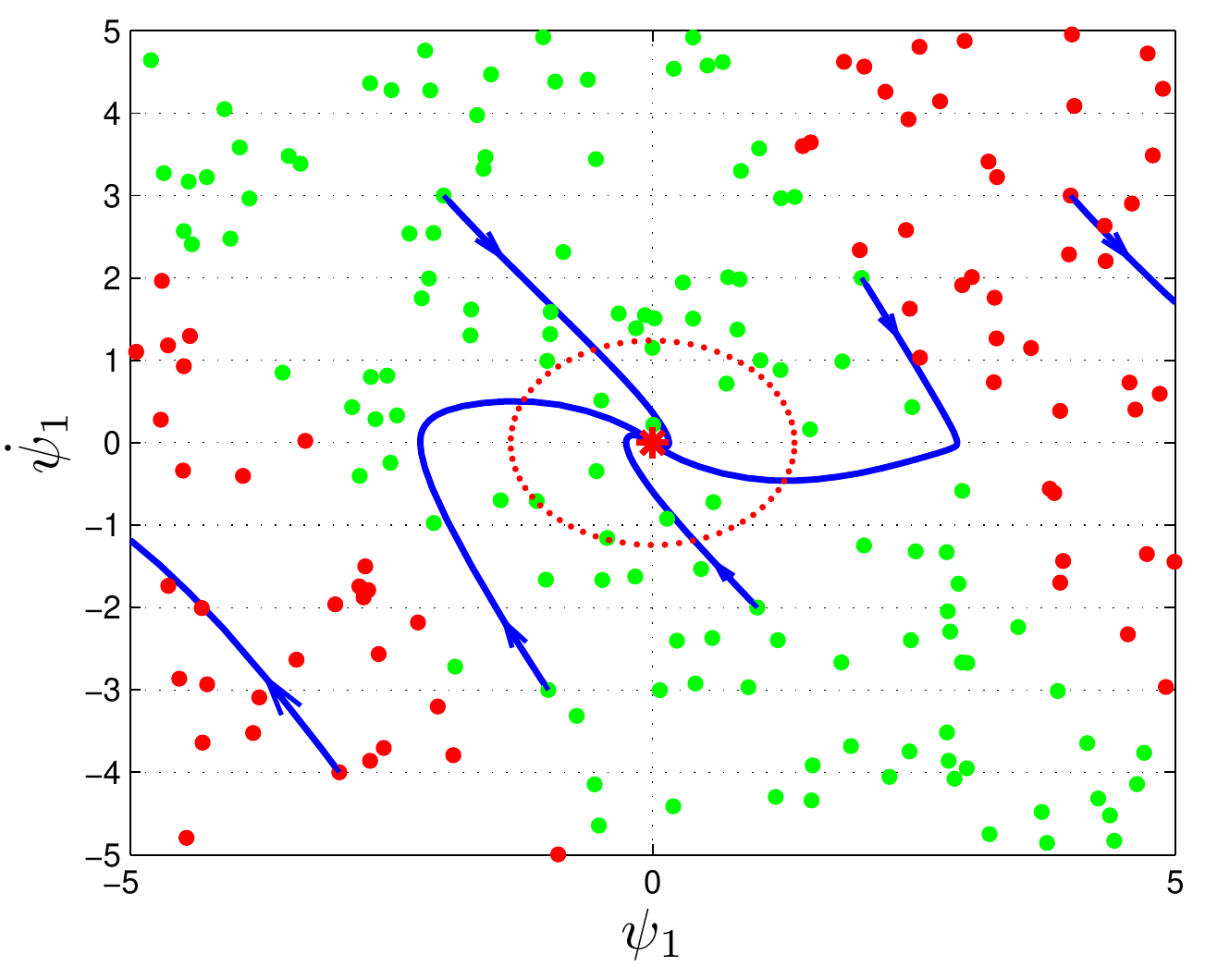}}
 \caption{\label{traj} Sampling points and state trajectories of SMIB system. Red dots represents the unstable sampling points that fail to converge to the origin, while green dots refer to stable sampling points that converge to the origin. Blue lines indicate their state trajectories, and the arrows point in the direction of state trajectories. The red dashed ellipse denotes the certified ROA according to (\ref{roa_est}).}
\end{figure}

In this SMIB system, bus $1$ is the generator bus which is connected to the infinity bus $2$ where $\theta_2 = \psi_2 = 0$. The swing equation for this SMIB system is given by
\begin{equation*}
    m_1\Ddot{\psi}_1+d_1\dot{\psi}_1=\frac{1}{B_{12}}\left[\sin{\theta^\star_{1}}-\sin(\theta^\star_{1}+\psi_{1})\right]
\end{equation*}
where $m_1=12$, $d_1=20$, $p_1=0.5$ and $B_{12}=0.1$, and thus we have $\sin\left(\theta^\star_{12}\right) = 0.05$ and $\theta^\star_{12}=\arcsin(0.05)$. All values are in p.u. Then the ROA can be estimated by $\{(\psi_1,\dot{\psi}_1)|10\psi^2_1+12\dot{\psi}^2_1\leq 18.45\}$ according to (\ref{roa_est}) and it is described by the red dashed ellipse in Fig. \ref{traj}. The GP-UCB based Algorithm in Table \ref{tab:gpa} is adopted to assess the confidence level of operating states around the certified ROA. The parameter setting is given as follows: $N=100$, $\delta=0.05$, $\xi=0.01$, $\Delta t=0.01$ and $t_n=100$. In addition, the squared exponential kernel or covariance function is employed to learn the unknown Lyapunov function with the unit characteristic length-scale and $\mu_0=0$. Figure \ref{roa100} presents the simulation result for the SMIB system. The sampling points are denoted by small green dots, and the region of operating states with confidence level at least $95\%$ has been marked in yellow. This implies that each state point in the yellow region converges to the origin with the probability that is not less than $95\%$. It is observed that the red dashed ellipse is closely surrounded by the yellow region. Notably, the yellow region covers most sampling points outside the certified ROA except for four sampling points that are far away from the origin.

\begin{figure}\centering
 {\includegraphics[width=0.435\textwidth]{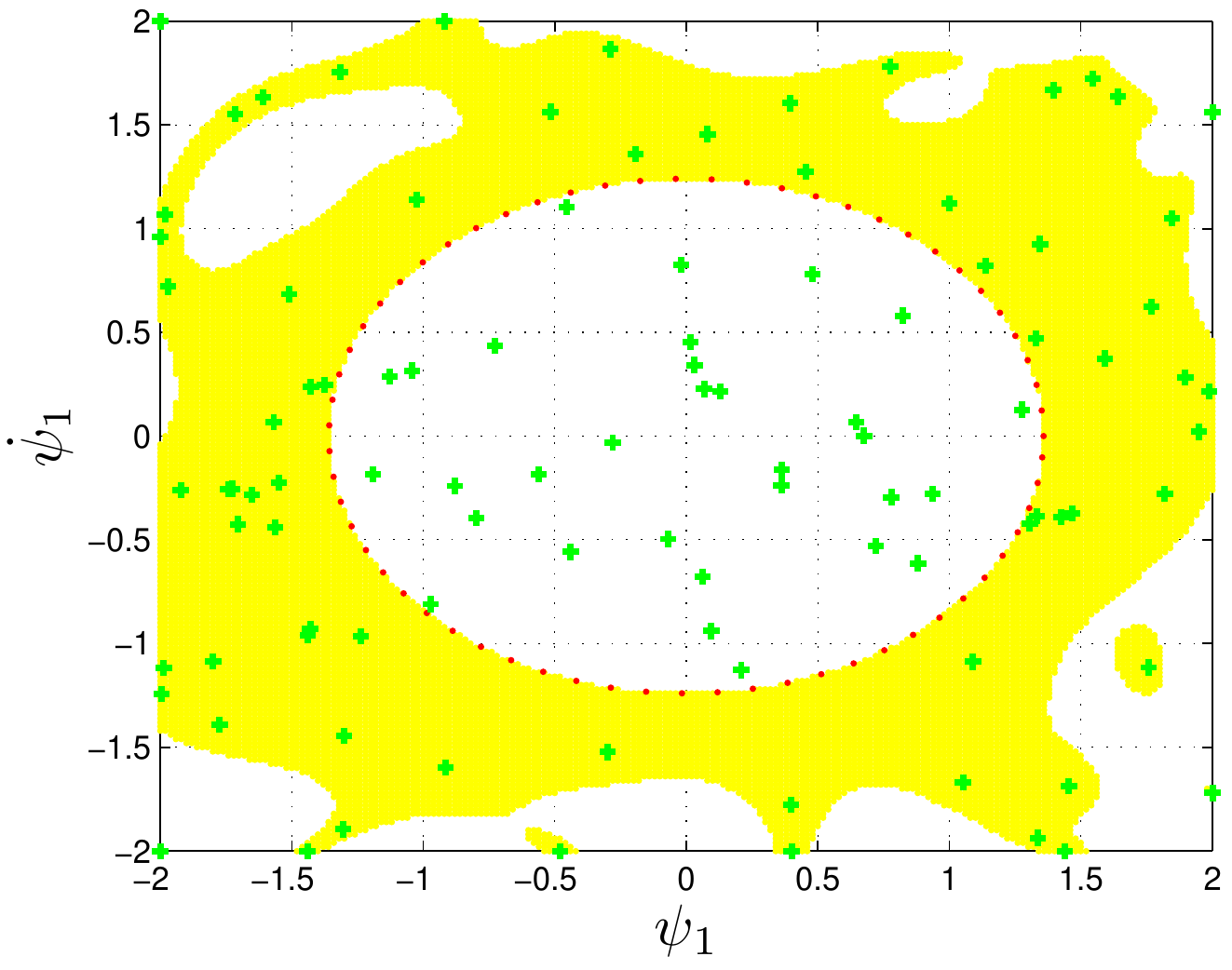}}
 \caption{\label{roa100} Confidence evaluation for the ROA of SMIB system with $100$ sampling points. The green dots denote the stable sampling points, and red ellipse region refers to the certified ROA with an existing Lyapunov function. The yellow region indicates the state that converges to the origin with the probability at least $95\%$.}
\end{figure}

\subsection{IEEE 39 bus system}

In order to validate the scalability of the proposed approach, numerical simulations are also conducted on  the IEEE 39-bus 10-machine system (see Fig. \ref{b39}), and the parameters are the same as those in \cite{dor14}. Bus $31$ with a generator is assigned as the swing bus. We then consider the dynamics for the remaining $9$ machines. The detailed parameter setting and the algorithm can be found in our MATLAB code made available in GitHub \cite{github}. Figure \ref{roa39} shows the assessment result for the ROA of IEEE $39$ bus system with $9$ machines. To facilitate the visualization, we project the stable sampling points, the certified ROA and the confidence region onto $9$ distinct two-dimensional planes, respectively. Essentially, each plane acts as a cross section to showcase the profile of confidence region and certified ROA with respect to a different machine. In each panel of Fig. \ref{roa39}, the state points inside the red dashed ellipse is guaranteed to converge to the origin, while those in the yellow region are asymptotically convergent with the probability at least $95\%$.

By Monte Carlo like estimate, the volume of yellow region is about $2.3\times10^4$ times larger than that of the certified ROA. This achievement is due to the large number of state dimensions or the size of the considered dynamical system. Note that this comparison is not entirely fair as the certified ROA can ensure that the system state will always converge to the stable equilibrium if it starts from the inside, while our estimated yellow region is only $95\%$ confident.

Similar to the case of the SMIB system, the yellow region does not cover the state points that are relatively far away from the origin. Actually, the profile of yellow regions largely depends on the distribution and size of sampling points as well as the choice of kernel functions for GP learning.

\begin{figure}\centering
\scalebox{0.1}[0.1]{\includegraphics{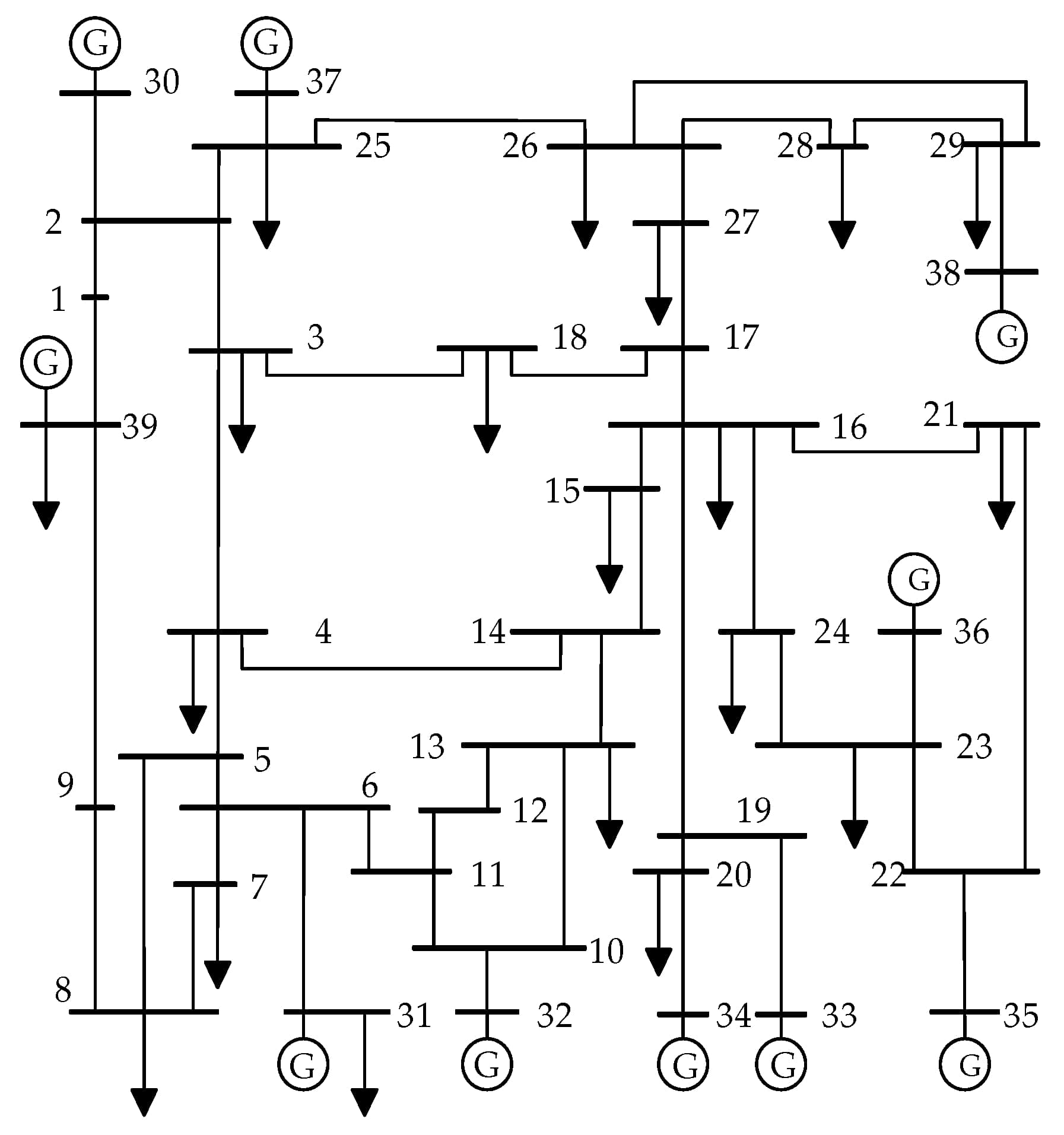}}
\caption{\label{b39} IEEE 39 Bus System}
\end{figure}

\begin{figure}\centering
 {\includegraphics[width=0.48\textwidth]{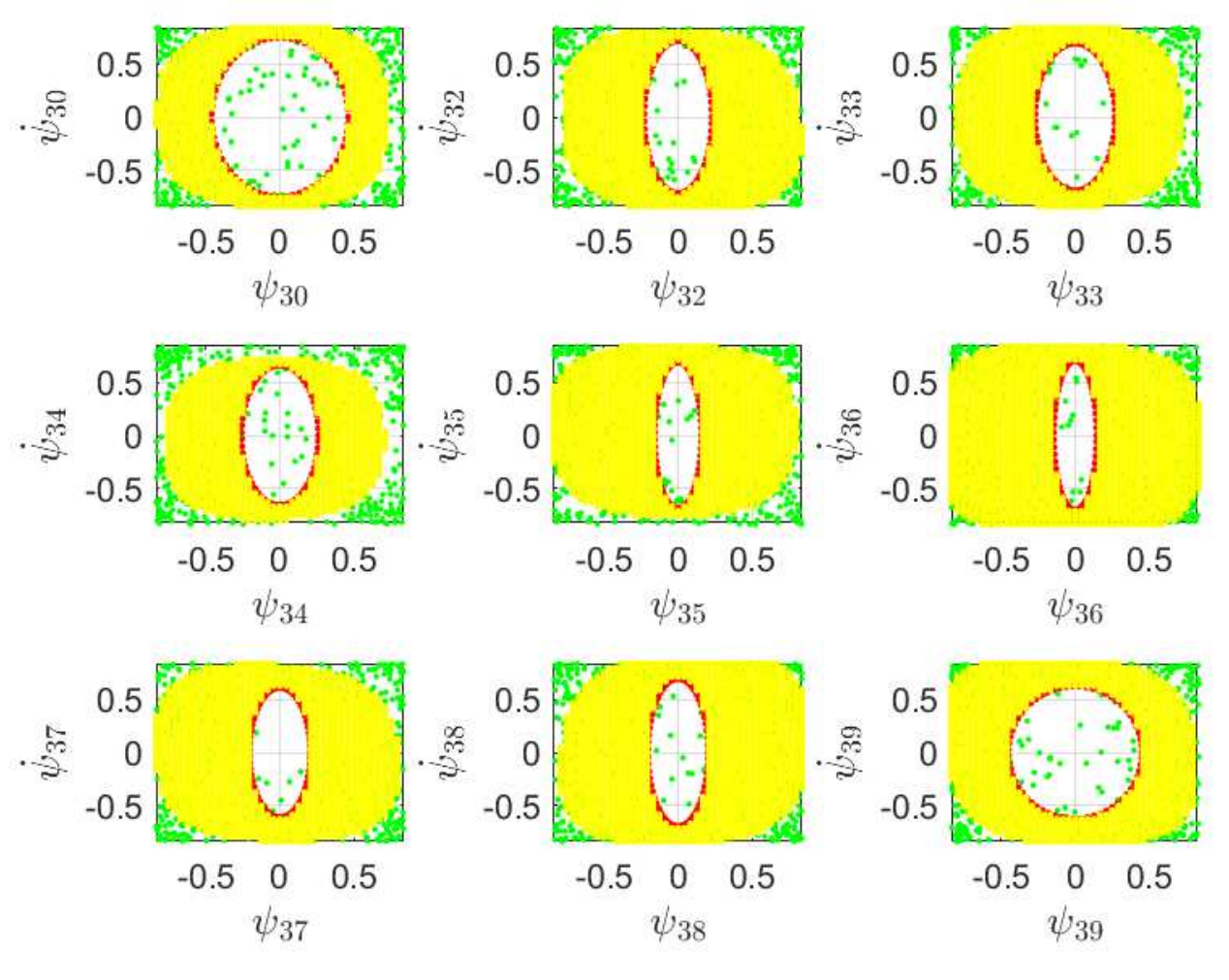}}\caption{\label{roa39} Confidence evaluation for the ROA of IEEE 39 bus system with $400$ sampling points. In each panel, the green dots denote the stable sampling points, and the red dashed ellipse represents the boundary of a certified ROA. The yellow region indicates the state that converges to the origin with the probability at least $95\%$.}
\end{figure}

\subsection{Discussions on computational cost}
The computational burden associated with GP-UCB algorithm is mostly due to two processes: solving the dynamical equation or swing equation using ODE solvers, and GP optimization presented in \eqref{mu_sig}. While the former depends heavily on the system size, the dimension of differential equation for power systems does not result in the visible increase of computational cost in GP optimization. Indeed, the computational cost is essentially immune to the dimension of power system dynamics. This is because the computation burden largely results from the operation of matrix inverse in (\ref{mu_sig}), where the size of kernel matrix $K_N$  mainly depends on the number of sampling points $N$ rather than the dimension of power system dynamics. With Matlab 2017b in the desktop (Intel i7-3770 CPU 3.40GHz and installed RAM 8GB), it takes around $10$ minutes for IEEE $39$-bus system and $50$ seconds for the SMIB system in our simulations. For the large-size sampling data, many efficient approaches for training the GP with the desirable performance \cite{liu18} are available. For instance, the sparse representation of GP model is developed to overcome the limitations for large data sets via the sequential construction of a sub-sample of the entire sampling data \cite{csa02}.

\section{Conclusions and Future Work}\label{sec:con}
In this paper, we investigated the problem of estimating the ROA for power systems. By treating the unknown Lyapunov function as a Gaussian Process, we assessed state stability
of power systems with the aid of the converse Lyapunov function. For an existing Lyapunov function, our approach allows for extending the pre-existing ROA with a provable confidence level. In addition, a GP-UCB based algorithm was developed to deal with the trade-off between exploration and exploitation in selecting stable sampling points. Numerical simulations are conducted to validate the proposed approach on the IEEE test cases.

In the next step, we will consider the online learning applications of unknown dynamics in practical power systems and the real-time prediction and stability assessment. This requires the creation of an efficient numerical algorithm with the aid of Gaussian Process and the converse Lyapunov function, which introduces the second part of this work. Another improvement is to optimally rescale state variables to be aligned to the shape of the real ROA which is not equal in all dimensions.

\section*{Acknowledgement}
The work of Chao Zhai and Hung Nguyen is supported by NTU SUG.

\section*{Appendix}
This section provides a mathematical definition of the class $\Gamma$ function and theoretical proofs for Lemma \ref{lem:bound}, Lemma \ref{gamma} and Theorem \ref{theo}, respectively. First of all, the definition is presented as follows.

\subsection{The class $\Gamma$ function}\label{app:def}

\begin{definition}\label{def}
The class $\Gamma$ function consists of all continuous functions $\alpha: [0,a)\rightarrow{[0,\infty]}$ which satisfy the following conditions:
\begin{enumerate}
  \item $\forall z>0$, $\alpha(z)\in C^2$.
  \item $\forall z>y\geq0$, $\alpha(z)>\alpha(y)$ and $\alpha(0)=0$.
  \item $\forall z\geq0$, $\exists~m>0$, such that $\alpha(z)\leq z^m$.
\end{enumerate}
\end{definition}

\begin{remark}
Condition $1$ in Definition \ref{def} indicates that the first two derivatives of the class $\Gamma$ function $\alpha(z)$ exist and are continuous. Condition $2$ implies that $\alpha(z)$ is a strictly increasing function. In addition, Condition $3$ aims to impose the restriction on the rate of $\alpha(z)$.
\end{remark}

\begin{remark} \label{app:levelset}
The construction of $V(\mathbf{x})$ in Lemma \ref{lya} allows us to obtain an important property on the region of attraction $S$ (i.e., $\lim_{c\rightarrow{+\infty}}\Omega_c=S$). Specifically, $\forall \mathbf{x}\in \Omega_c$, we have $\int_{0}^{\infty}\alpha(\|\mathbf{\phi}(\mathbf{x},t)\|)dt\leq c$, which implies $\lim_{t\rightarrow{+\infty}}\alpha(\|\mathbf{\phi}(\mathbf{x},t)\|)=0$ and $\lim_{t\rightarrow{+\infty}}\mathbf{\phi}(\mathbf{x},t)=\mathbf{0}$. Thus, we have $\mathbf{x}\in S$ and $\Omega_c\subseteq S$. In addition, it is guaranteed that $V(\mathbf{x})\leq\infty$, $\forall \mathbf{x}\in S$. This indicates $\mathbf{x}\in \Omega_{\infty}$ and thus $S\subseteq\Omega_{\infty}$. It follows that $\Omega_c\subseteq S\subseteq\Omega_{\infty}$. Considering that the constant $c$ can be sufficiently large, we have $\lim_{c\rightarrow{+\infty}}\Omega_c=S$.
\end{remark}

\subsection{An upper bound of discretizing error}\label{app:upbound}
An upper bound of the estimation error due to discretizing converse Lyapunov function $V(\mathbf{x})$ is given below.
\begin{lemma}\label{lem:bound}
Let $[\partial \mathbf{f}/\partial \mathbf{x}](\mathbf{0})$ be Hurwitz. There are positive constants $\kappa$, $\eta$, $\lambda$ and $m$ such that the following inequality holds
\begin{equation}\label{bound}
    \left|V(\mathbf{x})-\hat{V}(\mathbf{x})\right|\leq \frac{\kappa n (\Delta t)^3}{12}+\frac{\eta^m \|\mathbf{\phi}(\mathbf{x},t_n)\|^m}{m\lambda}
\end{equation}
\end{lemma}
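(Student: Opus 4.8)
The plan is to split the discretization error into two pieces corresponding to the two terms on the right-hand side of \eqref{bound}: a \emph{numerical quadrature error} on the finite interval $[0,t_n]$ and a \emph{truncation (tail) error} from ignoring the contribution of $\int_{t_n}^{\infty}\alpha(\|\mathbf{\phi}(\mathbf{x},t)\|)\,dt$. Concretely, I would write
$$
V(\mathbf{x})-\hat{V}(\mathbf{x})
=\underbrace{\left(\int_{0}^{t_n}\alpha(\|\mathbf{\phi}(\mathbf{x},t)\|)\,dt-\sum_{i=1}^{n}\alpha(\|\mathbf{\phi}(\mathbf{x},t_i)\|)\Delta t\right)}_{\text{(I): quadrature error}}
+\underbrace{\int_{t_n}^{\infty}\alpha(\|\mathbf{\phi}(\mathbf{x},t)\|)\,dt}_{\text{(II): tail error}},
$$
and bound $|$(I)$|$ and $|$(II)$|$ separately, then apply the triangle inequality.

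For term (I), the rectangular rule applied to a $C^2$ integrand $g(t):=\alpha(\|\mathbf{\phi}(\mathbf{x},t)\|)$ has error $O((\Delta t)^3)$ per subinterval. Here I would use that $\alpha$ is $C^2$ away from the origin (Condition 1 of Definition \ref{def}), that $\mathbf{f}$ is locally Lipschitz and smooth enough so that $t\mapsto\mathbf{\phi}(\mathbf{x},t)$ is $C^2$ (its time derivatives are controlled by $\mathbf{f}$ and its Jacobian), hence $g\in C^2$ on each subinterval with a second derivative bounded by some constant $\kappa$ over the compact trajectory segment. Standard Taylor/midpoint-type estimates then give an error of at most $\kappa(\Delta t)^3/12$ on each of the $n$ subintervals, summing to $\kappa n(\Delta t)^3/12$ — the first term of \eqref{bound}. (If the quadrature in \eqref{Vest} is read as a left Riemann sum rather than a midpoint rule, one instead uses the second-order Taylor remainder with the appropriate constant; the $1/12$ suggests a midpoint/Simpson-flavored argument, which I would match to whichever reading of \eqref{Vest} the authors intend.)

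For term (II), I would invoke the exponential decay of trajectories near the origin guaranteed by $[\partial\mathbf{f}/\partial\mathbf{x}](\mathbf{0})$ being Hurwitz: by the Hartman--Grobman / linearization estimate there are constants $\eta,\lambda>0$ with $\|\mathbf{\phi}(\mathbf{x},t)\|\le \eta\,\|\mathbf{\phi}(\mathbf{x},t_n)\|\,e^{-\lambda(t-t_n)}$ for $t\ge t_n$, valid once $\mathbf{\phi}(\mathbf{x},t_n)$ is inside the region where the linear estimate applies. Combining this with Condition 3 of Definition \ref{def}, $\alpha(z)\le z^m$, yields
$$
\int_{t_n}^{\infty}\alpha(\|\mathbf{\phi}(\mathbf{x},t)\|)\,dt
\le \int_{t_n}^{\infty}\eta^m\|\mathbf{\phi}(\mathbf{x},t_n)\|^m e^{-m\lambda(t-t_n)}\,dt
=\frac{\eta^m\|\mathbf{\phi}(\mathbf{x},t_n)\|^m}{m\lambda},
$$
which is exactly the second term of \eqref{bound}. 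Adding the two bounds completes the proof.

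The main obstacle I anticipate is making the constants in term (I) genuinely uniform: $\kappa$ must bound $|g''(t)|=|\tfrac{d^2}{dt^2}\alpha(\|\mathbf{\phi}(\mathbf{x},t)\|)|$ over the \emph{entire} finite horizon $[0,t_n]$, and near $t=0$ the state can be far from the origin while near $t=t_n$ the map $z\mapsto\alpha(z)$ may be delicate as $z\to0^+$ (since $\alpha$ is only assumed $C^2$ for $z>0$, and $\|\mathbf{\phi}\|$ is not differentiable where it vanishes). I would handle this by restricting attention to a compact forward-invariant neighborhood of the trajectory on which $\alpha,\alpha',\alpha''$ and the relevant derivatives of $\mathbf{\phi}$ are all bounded — which is legitimate because $\Omega_c$ is compact (Lemma \ref{lya}) and the trajectory stays in it — and, if necessary, absorbing the (integrable) behavior near the origin into the tail term instead of the quadrature term. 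A secondary technical point is matching the precise quadrature rule and its leading-order constant $1/12$ to the sum actually written in \eqref{Vest}; once that bookkeeping is pinned down the rest is routine.
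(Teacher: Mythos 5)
Your proposal matches the paper's proof essentially step for step: the same split into the quadrature error on $[0,t_n]$ and the tail $\int_{t_n}^{\infty}\alpha(\|\mathbf{\phi}(\mathbf{x},t)\|)\,dt$, the same $\kappa n(\Delta t)^3/12$ bound for the former (the paper pins this down by citing the composite trapezoidal rule, with $\kappa$ bounding $|\partial_{tt}\alpha|$ on $[0,t_n]$), and the same tail estimate via exponential stability from the Hurwitz Jacobian (Corollary 4.3 in \cite{kha96}) combined with Condition 3 of Definition \ref{def}. Your extra remarks on the uniformity of $\kappa$ and the choice of quadrature rule are reasonable refinements the paper does not dwell on, but the argument is the same.
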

The proof is given below.
\begin{equation*}
    \begin{split}
        &~~~~V(\mathbf{x})-\hat{V}(\mathbf{x}) \\
        &=V(\mathbf{x})-\int_{0}^{t_n}\alpha(\|\mathbf{\phi}(\mathbf{x},t)\|)dt \\
        & +\int_{0}^{t_n}\alpha(\|\mathbf{\phi}(\mathbf{x},t)\|)dt-\sum_{i=1}^{n}\alpha(\|\mathbf{\phi}(\mathbf{x},t_i)\|)\Delta t,
    \end{split}
\end{equation*}
that
\begin{equation*}
    \begin{split}
        &~~~\left|V(\mathbf{x})-\sum_{i=1}^{n}\alpha(\|\mathbf{\phi}(\mathbf{x},t_i)\|)\Delta t\right| \\
        &\leq\left| V(\mathbf{x})-\int_{0}^{t_n}\alpha(\|\mathbf{\phi}(\mathbf{x},t)\|)dt\right| \\
        & +\left|\int_{0}^{t_n}\alpha(\|\mathbf{\phi}(\mathbf{x},t)\|)dt-\sum_{i=1}^{n}\alpha(\|\mathbf{\phi}(\mathbf{x},t_i)\|)\Delta t\right|\\
        &=\int_{t_n}^{\infty}\alpha(\|\mathbf{\phi}(\mathbf{x},t)\|)dt \\
        & +\left|\int_{0}^{t_n}\alpha(\|\mathbf{\phi}(\mathbf{x},t)\|)dt-\sum_{i=1}^{n}\alpha(\|\mathbf{\phi}(\mathbf{x},t_i)\|)\Delta t\right|\\
    \end{split}
\end{equation*}
By the trapezoidal rule for numerical integration in the interval $[0,t_n]$ \cite{atk08}, the error bound is given by
$$
\left|\int_{0}^{t_n}\alpha(\|\mathbf{\phi}(\mathbf{x},t)\|)dt-\sum_{i=1}^{n}\alpha(\|\mathbf{\phi}(\mathbf{x},t_i)\|)\Delta t\right|\leq \frac{\kappa n (\Delta t)^3}{12}
$$
where $|\partial_{tt}\alpha|\leq\kappa$, $t\in[0,t_n]$. Considering that $[\partial \mathbf{f}/\partial \mathbf{x}](\mathbf{0})$ is Hurwitz, $\mathbf{x=0}$ is an exponentially stable equilibrium point from Corollary $4.3$ in \cite{kha96}. This implies that there are positive constants $\eta$ and $\lambda$, such that $\|\mathbf{\phi}(\mathbf{x},t)\|\leq \eta \|\mathbf{\phi}(\mathbf{x},t_n)\|e^{-\lambda(t-t_n)}$, $\forall t\geq t_n$.
In terms of Condition $3$ of the $\Gamma$ class function in Definition \ref{def}, there is the positive constant $m$ such that
\begin{equation*}
    \begin{split}
        \alpha(\|\mathbf{\phi}(\mathbf{x},t)\|)&\leq\alpha(\eta \|\mathbf{\phi}(\mathbf{x},t_n)\|e^{-\lambda(t-t_n)})\\
        &\leq \eta^m \|\mathbf{\phi}(\mathbf{x},t_n)\|^me^{-m\lambda(t-t_n)}, \quad \forall t\geq t_n
    \end{split}
\end{equation*}
Therefore, we obtain
\begin{equation*}
    \begin{split}
        \int_{t_n}^{\infty}\alpha(\|\mathbf{\phi}(\mathbf{x},t)\|)dt &\leq\int_{t_n}^{\infty}\eta^m \|\mathbf{\phi}(\mathbf{x},t_n)\|^me^{-m\lambda(t-t_n)}dt \\
        &=\eta^m \|\mathbf{\phi}(\mathbf{x},t_n)\|^m\int_{t_n}^{\infty}e^{-m\lambda(t-t_n)}dt \\
        &= \frac{\eta^m\|\mathbf{\phi}(\mathbf{x},t_n)\|^m}{m\lambda}
    \end{split}
\end{equation*}
This completes the proof.

We discuss some notes on the Inequality (\ref{bound}) in the following remark.
\begin{remark}
For the fixed total sampling time $t_n$,  the first term on the right hand side of Inequality (\ref{bound}) converges to $0$ as the sampling time interval $\Delta t$ goes to $0$. Since $\lim_{n\rightarrow{\infty}}\|\mathbf{\phi}(\mathbf{x},t_n)\|=0$, the second term on the right hand side of Inequality (\ref{bound}) converges to $0$ as $n$ goes to the positive infinity.
\end{remark}

\begin{remark}
The first partial derivative of $\alpha(\|\mathbf{\phi}(\mathbf{x},t)\|)$ with respect to time $t$ is given by
\begin{equation*}
    \begin{split}
        \partial_t\alpha &=\frac{\partial\alpha(\|\mathbf{\phi}(\mathbf{x},t)\|)}{\partial t} \\
        &=\alpha'(\|\mathbf{\phi}(\mathbf{x},t)\|)\cdot\frac{\mathbf{\phi}(\mathbf{x},t)^T}{\|\mathbf{\phi}(\mathbf{x},t)\|}\cdot\frac{d\mathbf{\phi}(\mathbf{x},t)}{dt} \\
        &=\alpha'(\|\mathbf{\phi}(\mathbf{x},t)\|)\cdot\frac{\mathbf{\phi}(\mathbf{x},t)^T\mathbf{f}(\mathbf{\phi}(\mathbf{x},t))}{\|\mathbf{\phi}(\mathbf{x},t)\|}
    \end{split}
\end{equation*}
where $\alpha'(z)=d\alpha(z)/dz$. And the second partial derivative of $\alpha(\|\mathbf{\phi}(\mathbf{x},t)\|)$ with respect to time $t$ is given by
\begin{equation}\label{att}
    \begin{split}
        \partial_{tt}\alpha
        &=\frac{\partial^2\alpha(\|\mathbf{\phi}(\mathbf{x},t)\|)}{\partial t^2} \\
        &=\frac{\partial}{\partial t}\left[\alpha'(\|\mathbf{\phi}(\mathbf{x},t)\|)\cdot\frac{\mathbf{\phi}(\mathbf{x},t)^T \mathbf{f}(\mathbf{\phi}(\mathbf{x},t))}{\|\mathbf{\phi}(\mathbf{x},t)\|}\right]\\
        &=\left(\alpha''(\|\mathbf{\phi}(\mathbf{x},t)\|)-1\right)\cdot\frac{[\mathbf{\phi}(\mathbf{x},t)^T \mathbf{f}(\mathbf{\phi}(\mathbf{x},t))]^2}{\|\mathbf{\phi}(\mathbf{x},t)\|^3} \\
        &~+\alpha'(\|\mathbf{\phi}(\mathbf{x},t)\|)\cdot\frac{\|\mathbf{f}(\mathbf{\phi}(\mathbf{x},t))\|^2+\mathbf{\phi}(\mathbf{x},t)^T \partial_t \mathbf{f}}{\|\mathbf{\phi}(\mathbf{x},t)\|},
    \end{split}
\end{equation}
where $\alpha''(z)=d^2\alpha(z)/dz^2$ and
$$
\partial_t \mathbf{f}=\frac{\partial \mathbf{f}(\mathbf{\phi}(\mathbf{x},t))}{\partial t}=\mathbf{f}'(\mathbf{\phi}(\mathbf{x},t)) \mathbf{f}(\mathbf{\phi}(\mathbf{x},t)).
$$
Equation (\ref{att}) enables us to estimate the upper bound of $|\partial_{tt}\alpha|$ in $[0,t_n]$ and obtain the constant $\kappa$.
\end{remark}

\subsection{Information gain}\label{app:gain}

To quantify the reduction in uncertainty on $V(\mathbf{x})$ from observations $\hat{V}(\mathbf{x})$, the information gain is introduced for a Gaussian process as follows
\begin{equation*}
    I(\hat{V}_\mathcal{A};V_\mathcal{A})=\frac{1}{2}\log|I+\sigma^{-2}K_\mathcal{A}|,
\end{equation*}
where $K_\mathcal{A}=[k(\mathbf{x},\mathbf{x}')]_{\mathbf{x},\mathbf{x}'\in \mathcal{A}}$ is the covariance matrix of $V_\mathcal{A}=[V(\mathbf{x})]_{\mathbf{x}\in \mathcal{A}}$ with the sample set $\mathcal{A}$. Let $\gamma_N$ denote the upper bound of $I(\hat{V}_\mathcal{A};V_\mathcal{A})$ for the sample set $\mathcal{A}$ with $|\mathcal{A}|=N$. That is
$$
\gamma_N=\max_{\mathcal{A}\subset X: |\mathcal{A}|=N}I(\hat{V}_\mathcal{A};V_\mathcal{A})
$$
Actually, $\gamma_N$ is related to the choice of kernel functions $k(\mathbf{x},\mathbf{x}')$.
\begin{lemma}\label{gamma}
For $\mathcal{A}\subseteq X\subset R^n$, it holds that
$$
\gamma_N\leq \frac{N}{2\sigma^2}
$$
when $k(\mathbf{x},\mathbf{x}')\leq1$.
\end{lemma}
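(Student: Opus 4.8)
The plan is to bound the log-determinant $\log|I+\sigma^{-2}K_\mathcal{A}|$ directly through the trace of $K_\mathcal{A}$. First I would observe that $K_\mathcal{A}=[k(\mathbf{x},\mathbf{x}')]_{\mathbf{x},\mathbf{x}'\in\mathcal{A}}$ is symmetric and positive semidefinite (it is a kernel, equivalently covariance, matrix), so it has an eigendecomposition with nonnegative eigenvalues $\lambda_1,\dots,\lambda_N\geq0$. Then $\sigma^{-2}K_\mathcal{A}$ has eigenvalues $\sigma^{-2}\lambda_i\geq0$, and $|I+\sigma^{-2}K_\mathcal{A}|=\prod_{i=1}^{N}(1+\sigma^{-2}\lambda_i)$.

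Next I would invoke the elementary inequality $\log(1+z)\leq z$, valid for all $z\geq0$, applied factor by factor:
\begin{equation*}
\log|I+\sigma^{-2}K_\mathcal{A}|=\sum_{i=1}^{N}\log\bigl(1+\sigma^{-2}\lambda_i\bigr)\leq\sum_{i=1}^{N}\sigma^{-2}\lambda_i=\sigma^{-2}\,\mathrm{tr}(K_\mathcal{A}).
\end{equation*}
Then I would use the hypothesis $k(\mathbf{x},\mathbf{x}')\leq1$ in the special case $\mathbf{x}=\mathbf{x}'$, i.e. $k(\mathbf{x},\mathbf{x})\leq1$, to get $\mathrm{tr}(K_\mathcal{A})=\sum_{\mathbf{x}\in\mathcal{A}}k(\mathbf{x},\mathbf{x})\leq N$. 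Combining the two bounds gives $I(\hat{V}_\mathcal{A};V_\mathcal{A})=\tfrac12\log|I+\sigma^{-2}K_\mathcal{A}|\leq\tfrac{N}{2\sigma^2}$ for every $\mathcal{A}\subseteq X$ with $|\mathcal{A}|=N$; taking the maximum over all such $\mathcal{A}$ yields $\gamma_N\leq\tfrac{N}{2\sigma^2}$.

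There is no substantial obstacle here: the argument is a one-line trace bound on the information gain. The only two points worth stating carefully are that $K_\mathcal{A}\succeq0$ (so its eigenvalues are nonnegative and $\log(1+z)\le z$ applies termwise), and that the assumption $k(\mathbf{x},\mathbf{x}')\leq1$ must be read as including the diagonal entries, since it is precisely the diagonal that controls $\mathrm{tr}(K_\mathcal{A})$; a bound on off-diagonal entries alone would not suffice.
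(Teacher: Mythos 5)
Your proof is correct and follows essentially the same route as the paper: eigendecomposition of the kernel matrix, the termwise bound $\log(1+z)\leq z$, and the trace bound $\mathrm{tr}(K_\mathcal{A})\leq N$ from $k(\mathbf{x},\mathbf{x})\leq1$, then dividing by two and maximizing over $\mathcal{A}$. Your added remarks (positive semidefiniteness suffices, and the hypothesis must cover the diagonal entries) are accurate minor refinements rather than a different argument.
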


\begin{proof}
Since $K_\mathcal{A}$ is a positive definite kernel matrix, it follows that
$$
|I+\sigma^{-2}K_\mathcal{A}|=\prod_{i=1}^{N}(1+\sigma^{-2}\lambda_i)
$$
where $\lambda_i$, $i\in I_N=\{1,...,N\}$ are all positive eigenvalues of $K_\mathcal{A}$. Considering that $\log(1+x)\leq x$, $\forall x\geq0$, we have
\begin{equation*}
    \begin{split}
        \log|I+\sigma^{-2}K_\mathcal{A}|&=\log\left(\prod_{i=1}^{N}(1+\sigma^{-2}\lambda_i)\right)\\
        &=\sum_{i=1}^{N}\log(1+\sigma^{-2}\lambda_i)\\
        &\leq \sigma^{-2}\sum_{i=1}^{N}\lambda_i
    \end{split}
\end{equation*}
Because of $\sum_{i=1}^{N}\lambda_i=tr(K_\mathcal{A})$ and $k(\mathbf{x},\mathbf{x}')\leq1$, we obtain
\begin{equation*}
    \sum_{i=1}^{N}\lambda_i=tr(K_\mathcal{A})=\sum_{i=1}^{N}k(\mathbf{x}_i,\mathbf{x}_i)\leq N
\end{equation*}
Therefore, we get
$$
\log|I+\sigma^{-2}K_\mathcal{A}|\leq \sigma^{-2}N, \quad \forall \mathbf{x}\in\mathcal{A}, \quad |\mathcal{A}|=N
$$
and
$$
\gamma_N\leq\frac{N}{2\sigma^2}
$$
This completes the proof.
\end{proof}

\begin{remark}
The upper bound for $\gamma_N$ in Lemma \ref{gamma} is applied to all kernel functions satisfying $k(\mathbf{x},\mathbf{x}')\leq1$. For a specific kernel (e.g., finite dimensional linear kernel, squared exponential kernel and Mat\'ern kernel, etc), the tighter upper bound is available \cite{sri12}.
\end{remark}

\subsection{Computation of the RKHS norm} \label{app:RKHS}
Kernel ridge regression can be formulated as a regularized empirical risk minimization problem over the RKHS $H_k(X)$ as follows
$$
\min_{V\in H_k(X)}\frac{1}{N}\sum_{i=1}^{N}(\hat{V}(\mathbf{x}^{(i)})-V(\mathbf{x}^{(i)}))^2+\theta\|V\|_k^2.
$$
It follows from Theorem $3.4$ in \cite{kan18} that there is a unique solution to the above minimization problem, and the solution is given by $V(\mathbf{x})=\sum_{i=1}^{N}c_i k(\mathbf{x},\mathbf{x}^{(i)})$ with the coefficient vector $\mathbf{c}=(c_1,c_2,...,c_N)^T=(K_N+ N \theta I)^{-1}\hat{V}_N$. Thus, $\|V\|^2_k$ can be computed by
$$
\|V\|^2_k =\langle V,V\rangle_k=\sum_{i=1}^{N}\sum_{j=1}^{N}c_i c_j k(\mathbf{x}^{(i)},\mathbf{x}^{(j)})=\textbf{c}^T K_N \textbf{c}.
$$
By substituting the vector $\textbf{c}=(K_N+ N \theta I)^{-1}\hat{V}_N$ and the positive semidefinite matrix $K_N=P^T\text{diag}(\lambda_i)P$, we have
\begin{equation*}
    \begin{split}
        \|V\|^2_k &=\mathbf{c}^T K_N \mathbf{c} \\
                  &=\hat{V}^T_N (K_N+ N \theta I)^{-1} K_N (K_N+ N \theta I)^{-1}\hat{V}_N \\
                  &=\hat{V}^T_N (K_N+ N \theta I)^{-1} P^T\text{diag}(\lambda_i)P (K_N+ N\theta I)^{-1}\hat{V}_N \\
                  &=\hat{V}^T_NP^T\text{diag}(\lambda_i(\lambda_i+N\theta)^{-2})P\hat{V}_N.
    \end{split}
\end{equation*}

\subsection{Proof of Theorem \ref{theo}}\label{app:theo}
The proof consists of two parts. The first part aims to estimate the ROA for a stable equilibrium point, and the second part centers on the ROA for the case of an existing Lyapunov function $V^\star(\mathbf{x})$.

$(1)$ Since $\mathbf{x=0}$ is an asymptotically stable equilibrium point for the nonlinear system $\mathbf{\dot{x}=f(x)}$, Lemma \ref{lya} guarantees the existence of Lyapunov function $V(\mathbf{x})$. Considering a sequence of sampling points $\mathbf{x}^{(1)}$, $\mathbf{x}^{(2)}$,..., $\mathbf{x}^{(N)}$ in $\mathcal{A}\subset X$ that can generate the stable state trajectory, the values of $V(\mathbf{x}^{(i)})$, $i\in\{1,...,N\}$ can be estimated as $\hat{V}(\mathbf{x}^{(i)})$ according to Equation (\ref{Vest}). By treating $\hat{V}(\mathbf{x}^{(i)})$ as the observations of a GP, the measurement noise $\epsilon$ is uniformly bounded by $\sigma$, which satisfies
$$
 \sigma\leq\frac{\kappa n (\Delta t)^3}{12}+\frac{\eta^m \|\mathbf{\phi}(\mathbf{x},t_n)\|^m}{m\lambda}
$$
from Lemma \ref{lem:bound}. In addition, it follows from Theorem \ref{betaT} that with probability at least $1-\delta$, the inequality
$$
\left|V(\mathbf{x})-\mu_{N-1}(\mathbf{x})\right|\leq \beta^{1/2}_N \sigma_{N-1}(\mathbf{x})
$$
holds, where $\beta_N=2\|V\|_k^2+300\gamma_N\ln^3{(N/\delta)}$ and $\gamma_N\leq\frac{N}{2\sigma^2}$ from Lemma \ref{gamma}.
This allows us to deduce that the inequality
$$
V(\mathbf{x})\leq \mu_{N-1}(\mathbf{x})+\beta^{1/2}_N \sigma_{N-1}(\mathbf{x}), \quad \forall \mathbf{x}\in X
$$
holds with probability at least $1-\delta$.
Then we define a level set for the converse Lyapunov function $V(\mathbf{x})$ as $W_N=\{\mathbf{x}\in X~|~V(\mathbf{x})\leq C_{\max,N}\}$, which includes all the stable state trajectories of sampling points $\mathbf{x}^{(i)}$, $i\in\{1,...,N\}$ and is a compact subset of the region of attraction $S$ (i.e., $W_N\subseteq S$). Thus, if we have $\mu_{N-1}(\mathbf{x})+\beta^{1/2}_N \sigma_{N-1}(\mathbf{x})\leq C_{\max,N}$, the inequality $V(\mathbf{x})\leq C_{\max,N}$ holds with probability at least $1-\delta$, which implies $\text{Prob}(\mathbf{x}\in W_N)\geq 1-\delta$, $\forall \mathbf{x}\in \Omega_{\delta,N}$ with
$$
\Omega_{\delta,N}=\left\{\mathbf{x}\in X ~|~\mu_{N-1}(\mathbf{x})+\beta^{1/2}_N \sigma_{N-1}(\mathbf{x})\leq C_{\max,N}\right\}.
$$
Considering that $W_N\subseteq S$, we obtain $\text{Prob}(\mathbf{x}\in S)\geq 1-\delta$, $\forall \mathbf{x}\in \Omega_{\delta,N}$.

$(2)$ Considering that $V^\star(\mathbf{x})$ is an existing Lyapunov function for the nonlinear system $\mathbf{\dot{x}=f(x)}$ with the certified ROA $\Omega^\star$, it is suggested that the origin is an asymptotically stable equilibrium point. This ensures the existence of a converse Lyapunov function $V(\mathbf{x})$ according to Lemma \ref{lya}. Define $\Delta V(\mathbf{x})=V(\mathbf{x})-V^\star(\mathbf{x})$ as an unknown function for the GP learning. By observing the measurements $\Delta \Hat{V}(\mathbf{x}^{(i)})=\hat{V}(\mathbf{x}^{(i)})-V^\star(\mathbf{x}^{(i)})$ as a GP at the sampling points $\mathbf{x}^{(i)}$, $i\in\{1,...,N\}$ according to GP-ROA based algorithm in Table \ref{tab:gpa}, it follows from Theorem \ref{betaT} that the inequality
$$
\left|\Delta V(\mathbf{x})-\mu_{N-1}(\mathbf{x})\right|\leq \bar{\beta}^{1/2}_N \sigma_{N-1}(\mathbf{x})
$$
holds with probability at least $1-\delta$. By replacing $\Delta V(\mathbf{x})$ with $V(\mathbf{x})-V^\star(\mathbf{x})$, we obtain
$$
\left|V(\mathbf{x})-V^\star(\mathbf{x})-\mu_{N-1}(\mathbf{x})\right|\leq \bar{\beta}^{1/2}_N \sigma_{N-1}(\mathbf{x}).
$$
In light of the proof for a stable equilibrium point, it is concluded that for any $x\in S_{\delta,N}$, the inequality $\text{Prob}(\mathbf{x}\in S)\geq 1-\delta$ holds.
This completes the proof.



\begin{thebibliography}{100}

\bibitem{nguyen2017contraction} Nguyen, Hung D and Vu, Thanh Long and Slotine, Jean-Jacques and Turitsyn, Konstantin, 2017. Contraction analysis of nonlinear dae systems. arXiv preprint arXiv:1702.07421

\bibitem{pareek} Parikshit Pareek, Konstantin Turitsyn, Krishnamurthy Dvijotham, Hung D Nguyen, 2018. A Sufficient Condition for Small-Signal Stability and Construction of Robust Stability Region. arXiv preprint arXiv:1811.03805

\bibitem{dongchan} Dongchan Lee, Hung D Nguyen, Krishnamurthy Dvijotham, Konstantin Turitsyn, 2018. Convex Restriction of Power Flow Feasibility Sets. arXiv preprint arXiv: 1803.00818

\bibitem{dhagash} Dhagash Mehta, Hung D. Nguyen, Konstantin Turitsyn, 2016. Numerical Polynomial Homotopy Continuation Method to Locate All The Power Flow Solutions. IET Generation, Transmission \& Distribution, 10.12 (2016): 2972-2980.

\bibitem{hungrobust} Hung D Nguyen, Konstantin Turitsyn, 2015. Robust stability assessment in the presence of load dynamics uncertainty. IEEE Transactions on Power Systems, 31.2 (2015): 1579-1594.

\bibitem{hungmulti} Hung D Nguyen, Konstantin Turitsyn, 2015, Voltage multistability and pulse emergency control for distribution system with power flow reversal. IEEE Transactions on Smart Grid 6.6 (2015): 2985-2996.

\bibitem{teti} Bogodorova, Tetiana and Vanfretti, Luigi and Turitsyn, Konstantin, 2015. Bayesian parameter estimation of power system primary frequency controls under modeling uncertainties. IFAC-PapersOnLine, 48(28), pp. 461--465.

\bibitem{Longfamily} T. L. Vu and K. Turitsyn, 2016. Lyapunov Functions Family Approach to Transient Stability Assessment. IEEE Transactions on Power Systems, 31(2), pp. 1269-1277.


\bibitem{zz} Zheng Zhang, Hung Dinh Nguyen, Konstantin Turitsyn, Luca Daniel, 2015. Probabilistic power flow computation via low-rank and sparse tensor recovery. arXiv preprint arXiv:1508.02489

\bibitem{Danwu} Lesieutre, Bernard and Wu, Dan, 2015. An efficient method to locate all the load flow solutions-revisited. 53rd Annual Allerton Conference on Communication, Control, and Computing (Allerton), pp. 381--388.

\bibitem{Molzahn} J. Liu and B. Cui and D.K. Molzahn and C. Chen and X. Lu, 2019. Optimal Power Flow for DC Networks with Robust Feasibility and Stability Guarantees.

\bibitem{aolaritei2017distributed} Aolaritei, Liviu and Bolognani, Saverio and D{\"o}rfler, Florian, 2017. A distributed voltage stability margin for power distribution networks. IFAC-PapersOnLine, 50(1), pp. 13240--13245.


\bibitem{Petr} P. Vorobev and P. {Huang} and M. Al Hosani and J. L. Kirtley and K. Turitsyn, 2018. High-Fidelity Model Order Reduction for Microgrids Stability Assessment. IEEE Transactions on Power Systems, 33(1), pp. 874-887.

\bibitem{tuyen2017analysis} Tuyen, Nguyen Duc and Fujita, Goro and Funabashi, Toshihisa and Nomura, Masakatsu, 2017. Analysis of transient-to-island mode of power electronic interface with conventional dq-current controller and proposed droop-based controller. Electrical Engineering, 99(1), pp. 47--57.

\bibitem{tuyen} Tran, Thanh Son and Nguyen, Duc Tuyen and Fujita, Goro, 2019. The Analysis of Technical Trend in Islanding Operation, Harmonic Distortion, Stabilizing Frequency, and Voltage of Islanded Entities. Resources, 8(1).

\bibitem{ali2017transversality} Ali, Mazhar and Dymarsky, Anatoly and Turitsyn, Konstantin, 2017. Transversality Enforced Newton Raphson Algorithm for Fast Calculation of Maximum Loadability. IET Generation, Transmission \& Distribution.

\bibitem{kha96} Khalil, H.K. and Grizzle, J.W., 2002. Nonlinear Systems (Vol. 3). Upper Saddle River, NJ: Prentice Hall.

\bibitem{izu18} Izumi, S., Somekawa, H., Xin, X. and Yamasaki, T., 2018. Estimation of regions of attraction of power systems by using sum of squares programming. Electrical Engineering, 100(4), pp. 2205-2216.

\bibitem{che11} Chesi, G., 2011. Domain of attraction: analysis and control via SOS programming (Vol. 415). Springer Science \& Business Media.

\bibitem{bob16} Bobiti, R. and Lazar, M., 2018. Automated-Sampling-Based Stability Verification and DOA Estimation for Nonlinear Systems. IEEE Transactions on Automatic Control, 63(11), pp. 3659-3674.

\bibitem{kan18} Kanagawa, M., Hennig, P., Sejdinovic, D. and Sriperumbudur, B.K., 2018. Gaussian processes and kernel methods: A review on connections and equivalences. arXiv preprint arXiv:1807.02582.

\bibitem{ras10} Rasmussen, C.E. and Nickisch, H., 2010. Gaussian processes for machine learning (GPML) toolbox. Journal of Machine Learning Research, 11(Nov), pp. 3011-3015.

\bibitem{sri12} Srinivas, N., Krause, A., Kakade, S.M. and Seeger, M.W., 2012. Information-theoretic regret bounds for gaussian process optimization in the bandit setting. IEEE Transactions on Information Theory, 58(5), pp. 3250-3265.

\bibitem{ber16} Berkenkamp, F., Krause, A. and Schoellig, A.P., 2016. Bayesian optimization with safety constraints: safe and automatic parameter tuning in robotics. arXiv preprint arXiv:1602.04450.

\bibitem{gib00} Gibbs, M.N. and MacKay, D.J., 2000. Variational Gaussian process classifiers. IEEE Transactions on Neural Networks, 11(6), pp. 1458-1464.

\bibitem{mun13} Münz, U. and Romeres, D., 2013. Region of attraction of power systems. IFAC Proceedings Volumes, 46(27), pp. 49-54.

\bibitem{jone17} Jones, M., Mohammadi, H. and Peet, M. M., 2017, December. Estimating the region of attraction using polynomial optimization: A converse Lyapunov result. In 2017 IEEE 56th Annual Conference on Decision and Control (CDC) (pp. 1796-1802). IEEE.

\bibitem{kun94} Kundur, P., Balu, N. J. and Lauby, M. G., 1994. Power System Stability and Control (Vol. 7).  New York: McGraw-Hill.

\bibitem{kim70} G. S. Kimeldorf and G. Wahba. A correspondence between Bayesian estimation on stochastic processes and smoothing by splines. The Annals of Mathematical Statistics, 41(2): 495-502,
1970.

\bibitem{moc75} Mockus, J., 1975. On Bayesian methods for seeking the extremum. In Optimization Techniques IFIP Technical Conference (pp. 400-404). Springer, Berlin, Heidelberg.

\bibitem{moc12} Mockus, J., 2012. Bayesian Approach to Global Optimization: Theory and Applications (Vol. 37). Springer Science \& Business Media.

\bibitem{bro10} Brochu, E., Cora, V. M. and De Freitas, N., 2010. A tutorial on Bayesian optimization of expensive cost functions, with application to active user modeling and hierarchical reinforcement learning. arXiv preprint arXiv: 1012.2599.

\bibitem{sto13} Stoer, J. and Bulirsch, R., 2013. Introduction to Numerical Analysis (Vol. 12). Springer Science \& Business Media.

\bibitem{zim11} R. D. Zimmerman, C. E. Murillo-Sánchez, and R. J. Thomas, ``MATPOWER: Steady-State Operations, Planning and Analysis Tools for Power Systems Research and Education,"  IEEE Transactions on Power Systems, vol. 26, no. 1, pp. 12-19, Feb. 2011.

\bibitem{sau98} Sauer, P.W. and Pai, M.A., 1998. Power System Dynamics and Stability (Vol. 101). Upper Saddle River, NJ: Prentice hall.

\bibitem{dor14} Dörfler, F., Jovanović, M.R., Chertkov, M. and Bullo, F., 2014. Sparsity-promoting optimal wide-area control of power networks. IEEE Transactions on Power Systems, 29(5), pp. 2281-2291.

\bibitem{github} https://github.com/Chaocas/ROA-for-Power-Systems

\bibitem{liu18} Liu, H., Ong, Y.S., Shen, X. and Cai, J., 2018. When Gaussian process meets big data: A review of scalable GPs. arXiv preprint arXiv:1807.01065.

\bibitem{csa02} Csato, L. and Opper, M., 2002. Sparse on-line Gaussian processes. Neural computation, 14(3), pp. 641-668.

\bibitem{atk08} Atkinson, K. E., 2008. An introduction to Numerical Analysis. John Wiley \& Sons.



\end{thebibliography}
\end{document}